\theoremstyle{plain}
\newtheorem{theorem}{Theorem}
\newtheorem{corollary}{Corollary}
\newtheorem{lemma}{Lemma}
\newtheorem{proposition}{Proposition}
\newtheorem{definition}{Definition}
\newtheorem{claim}{Claim}
\newcommand{\f}{\varphi}
\newcommand{\ff}{\hat\varphi}
\newcommand{\sol}{\overline\varphi}
\newcommand{\g}{g^*}
\newcommand{\gx}{\hat g}
\newcommand{\gxs}{\hat g^*}
\newcommand{\frontier}{\textsc{Frontier}}
\def\bitcoin{%
  \leavevmode
  \vtop{\offinterlineskip 
    \setbox0=\hbox{B}%
    \setbox2=\hbox to\wd0{\hfil\hskip-.03em
    \vrule height .3ex width .15ex\hskip .08em
    \vrule height .3ex width .15ex\hfil}
    \vbox{\copy2\box0}\box2}}
\newcommand{\tagL}{\tagsleft@true}
\newcommand{\tagR}{\tagsleft@false}
\title{Blockchain Mining Games\footnote{This work is supported by ERC Advanced Grant 321171 (ALGAME), by ERC project CODAMODA, \# 259152,  and by H2020 project PANORAMIX, \# 653497.}}
\author{Aggelos Kiayias\thanks{University of Edinburgh (most of the work completed while at the National and Kapodistrian University of Athens). Email: \texttt{akiayias@inf.ed.ac.uk}}
\and
Elias Koutsoupias
\thanks{University of Oxford. Email: \texttt{\{elias,kyropoul\}@cs.ox.ac.uk}}\and
Maria Kyropoulou$^\ddag$ \and
Yiannis Tselekounis\thanks{National and Kapodistrian University of Athens. Email: \texttt{tselekounis@di.uoa.gr}}
}
\date{\empty}
\begin{document}
\maketitle

\begin{abstract}
  We study the strategic considerations of miners participating in the
  bitcoin's protocol. We formulate and study the stochastic game that
  underlies these strategic considerations. The miners collectively
   build a tree of blocks, and they are paid when they create a node (mine a block)
  which will end up in the path of the tree that is adopted by all.
  Since the miners can hide newly mined nodes, they play a game with incomplete information. Here we
  consider two simplified forms of this game in which the miners have
  complete information. In the simplest game the miners release every
  mined block immediately, but are strategic on which blocks to
  mine. In the second more complicated game, when a block is mined it
  is announced immediately, but it may not be released so that other
  miners cannot continue mining from it. A miner not only decides
  which blocks to mine, but also when to release blocks to other
  miners. In both games, we show that when the computational power of
  each miner is relatively small, their best response matches the
  expected behavior of the bitcoin designer. However, when the
  computational power of a miner is large, he deviates from the
  expected behavior, and other Nash equilibria arise.
  \end{abstract}

\section{Introduction}

Bitcoin is the most successful decentralized digital currency. It was first presented in the white paper \cite{Nakamoto08} under the pseudonym Satoshi Nakamoto. Its backbone is the blockchain protocol which attempts to keep a consisted list of transactions in a peer-to-peer network. The blockchain protocol successfully solves the real distributed problem of agreement, and has the potential to support novel applications which require distributed computing across a network.

Game-theoretic issues are very important for the correct execution of the blockchain protocol. This was realized at its inception when its creator, Nakamoto, analyzed incentives in a simple, albeit insufficient, model. Understanding these issues is essential for the survival of bitcoin and the development of the blockchain protocol. In practice it can help understand their strengths and vulnerabilities and, in economic and algorithmic theory, it can provide an excellent example for studying how rational (selfish) players can play games in a distributed way and map out their possibilities and difficulties.

Distilling  the essential game-theoretic properties of blockchain maintenance  is far from trivial; some ``attacks'' and vulnerabilities have been proposed but there seems to exist no systematic way to discover them. In this work, we study two models in which the miners (the nodes of the distributed network that run the protocol and are paid for it) play a \emph{complete-information stochastic game}. Although the miners in the actual blockchain game do not have complete information, our games aim to capture two important questions that selfish miners ask: (a) what to compute next (more precisely, which block to mine) and (b) when to release the results of computation (more precisely, when to release a mined block). By considering only complete information games, we may weaken the immediate applicability of the results, but we obtain a clean framework for studying these issues with proper focus and in a rigorous way.

We consider two stochastic games: (a) the immediate-release game in which every miner releases immediately the blocks that he mines; in this game, the strategy of every miner is to select an appropriate block to mine and (b) the strategic-release game in which miners not only can select which block to mine, but they can also withhold releasing blocks; we add the interesting twist that they must immediately announce a successful mining of a block but not the block itself; in this way, all miners have complete information of the current situation but can mine only the blocks which have been completely released by their discoverers. This interesting twist turns a very complicated game of incomplete information into an attractive, highly non-trivial game with complete information. We believe that although this is not the game actually modeling the real world\footnote{It is interesting to point out that there are potential real world settings where the fact that a block was mined becomes known but the block itself may not be released. This would correspond to a setting where a head of a pool withholds a block but a pool participant that found the block announces its discovery. }, it is of great value in understanding the game-theoretic aspects of the highly convoluted incomplete information game that takes place in the real world.

\subsection{The Bitcoin currency and the blockchain protocol}
\label{sec:bitc-blockch-prot}

The Bitcoin currency came into existence in 2009 when the first $\bitcoin 50$ (bitcoins) were introduced to the system. Since then it has gained significant recognition with more than $\bitcoin 15$ million  currently in circulation with an exchange rate of $\bitcoin 1$ for around $500$\euro\ at the time of this writing. 

Bitcoin was the first successful design of a fully distributed currency and is based on a peer-to-peer network achieving consensus on the broadcasted bitcoin transactions. The difficulty in the design of a fully distributed digital currency lies in avoiding the possibility of double spending the (easily reproducible) digital coins,  \cite{Nakamoto08,Finney:2011:DoubleSpendingAttack,Karame:2012:DoubleSpending}. Eliminating these possibilities is far from trivial in a decentralized setting, with applications that go way beyond the monetary nature of bitcoin. A practical pure consensus protocol can have far-reaching implications in many domains, e.g., secure distributed timestamping \cite{Haber:1991:Timestamping} or decentralizing Internet name services,
(see also \cite{Pease:1980:Agreement} \cite{Okun2007:ByzantineAgreement} regarding the topic of Byzantine Agreement in the standard and anonymous version respectively). 

Bitcoin's design is based on several previous attempts. The idea is that the users maintain a public log of all transactions that have taken place between the clients (bitcoin owners). The history of the recorded transactions alone determines the ownership of the bitcoins, so it is imperative that the users reach an agreement about it. Perhaps the most important element of the Bicoin protocol, allowing it to provide a consensus solution in a setting where identities are not available, is the notion of \emph{proof of work}  \cite{Dwork1993PricingProcessing}  \cite{Back-Hashcash} \cite{JuelBrainard-cryptographic99} \cite{RivestShamirWagner-TimeLock}. Bitcoin uses cryptography to make the miners provide proof of work before validating a block of transactions in an operation that is reminiscent of Sybil attack prevention techniques that were discussed earlier, see e.g., \cite{ajk2005}. Cryptographic techniques (digital signatures) are also used to guarantee that only the rightful owner of bitcoins can ``spend'' them by including them in a transaction.

We now give a brief but detailed description of how the Bitcoin protocol works. As mentioned before, the goal is for the users to maintain a public ledger listing all transactions of the form \emph{\{X pays Y the amount of \bitcoin Z\}} in a distributed fashion, i.e., without the need for a central authority. This is called the \emph{blockchain}. To add a block of transactions to the blockchain and claim the corresponding reward, a user has to solve a hard cryptographic puzzle, thus spending computational power. This process is called mining and we will use the term \emph{miner} to refer to the users.

Ideally, the blockchain would be a simple chain of blocks implying precedence between the corresponding transactions, i.e., a serialization of valid transactions between the clients of the Bitcoin protocol. This would be the case if miners always started mining at the last announced block and propagated each block creation immediately to the network of the remaining miners. However, the selfish nature of the miners who try to receive the rewards of as many blocks as possible (or even the inherent delay of block propagation in the distributed network)\footnote{Accidental forking happens every 60 blocks on average.} can result in temporary forks in the blockchain. The protocol suggests to the miners to always start mining at the end of the branch which needed the largest amount of computational effort so far, i.e., the end of the longest fork. This strategy is called \frontier\ and we will call the miners that follow it \emph{honest} miners. 

In order to mine a block, the miners are asked to compute some nonce value such that adding that value to the hash of the preceding block (specifying the branch of the blockchain that they are extending) and the batch of transactions they are trying to validate will render a block whose hash (SHA-256) does not exceed a certain threshold. Solving this puzzle is assumed to be computationally hard; however, the verification of a provided solution is easy. It is also assumed that in order to solve the puzzle, a miner cannot do any better than trying it for different inputs repeatedly. If a miner has more computational power than another, then he has higher probability to solve the puzzle faster and create the block. The difficulty of the puzzles is adjusted so that a single block is created every $10$ minutes on average. Once a miner solves a puzzle and creates a block, he becomes eligible to receive a reward. It is important for our game-theoretic analysis that the reward is given only if the corresponding block is permanently added to the blockchain. The reward for each successful mining is a fixed amount of newly created bitcoins plus fees from the transactions that are included in the block. This fixed reward was originally set to $\bitcoin 50$  while the protocol determines that it will be halved each time $210,000$ blocks are permanently added to the blockchain, which happens approximately every $4$ years. This implies that the protocol will eventually stop creating new bitcoins ($10^{-8}$ BTC is the minimal unit of Bitcoin), at which time the fixed rewards for mining will be entirely replaced by transaction fees.\footnote{Currently, the transaction fees are insignificant compared to fixed rewards (approximately 1.33\%).
}

The reward structure of the protocol guarantees that the honest miners' revenue is proportional to their computational power. However, understanding when it is profitable for the miners to deviate from the honest strategy is a central question and has attracted a lot of attention. 
The original assumption was that no miner has an incentive to deviate from the honest strategy if the majority of the miners are honest. However, this is not true as was shown by Eyal and Sirer \cite{Eyal2014MajorityIsNotEnough}. They gave a specific strategy which, when followed by a miner with computational power at least $33\%$ of the total power, provides rewards strictly better than the honest strategy (assuming that every other miner is honest). This was extended computationally in
 \cite{SapirsteinSompolinskyZohar-OptimalStrategies}.

\subsection{Our results}
\label{sec:our-results}

We consider two stochastic games whose states are rooted trees. The nodes of the tree are blocks that have been mined in the past. At every time-step, each miner selects a node of the current tree and tries to extend it by one new block. The probability that a miner succeeds in mining a new block is proportional to the miner's computational power. The utility of a miner is the fraction of successfully mined blocks in the common history, i.e., in the longest path from the root.

In the \emph{immediate-release} game, the miner can select any node of the tree to mine, while in the \emph{strategic-release}, the miner can select only nodes that have been (declared) \emph{released} by their creator.

If a miner has computational power above a threshold, he may not mine a node at the frontier (the set of deepest nodes) in the hope that his blocks will become the accepted history instead of the already mined blocks. Also, he may not release a node immediately in the hope that the computational power of the other miners will be wasted in mining blocks which will not be part of the common history.

Our work seeks to identify the thresholds of the computational power below of which \frontier\ (the honest strategy) is a Nash equilibrium while above them it is no longer the optimal strategy. We denote by $h_0$ and $\hat h_0$ the threshold for the immediate-release and the strategic-release games, respectively. It is easy to see that for both games $h_0, \hat h_0 \in [0,0.5]$. Our results are as follows. \begin{itemize} \item We prove that in the immediate-release setting the threshold is $0.361\leq h_0\leq 0.455$. This implies that a miner with at most $36\%$ of the total computational power can not gain more than $36\%$ of the total rewards, i.e., his fair share which he would gain by being honest and following the suggested strategy \frontier\mbox{}; and that a miner with computational power more than $46\%$ will always deviate from the honest strategy. An immediate consequence is that if every miner has computational power less than $h_0$, then \frontier\ is a \emph{Nash equilibrium}.  We have experimentally determined that the actual threshold is close to $h_0\approx 0.42$.

\item Regarding the strategic-release setting, we prove rigorously that the corresponding threshold is lower bounded by $\hat h_0\geq 0.308$ (root of the polynomial $p^3- 6 p^2+ 5 p - 1$). A similar result was obtained recently by Sapirshtein, Sompolinsky, and Zohar in \cite{SapirsteinSompolinskyZohar-OptimalStrategies}. Their result is tighter and sets the threshold below $0.33$ (cf. Figure~2 in their paper for $\gamma=0$). The difference between our approach and the approach of \cite{SapirsteinSompolinskyZohar-OptimalStrategies}, is that they provide an algorithm to compute an approximately optimal strategy and then run their algorithm to estimate the threshold. The authors acknowledge the fact that their ``algorithm copes with computational limitations by using finite MDPs as bounds to the original problem, and by analyzing the potential error that is due to inexact solutions''. Our result on the other hand, gives an exact, albeit suboptimal bound on the threshold and considers exact best responses in a purely mathematical (not computational) way.
\end{itemize}

\paragraph{Open problems:} This work sets the stage and makes progress towards a systematic study of these complicated stochastic games. A lot of issues remain open. For example, besides the obvious problem of tightening our results, there are a lot of interesting questions about the Nash equilibria above the thresholds $h_0$ and $\hat h_0$ in both types of games.

\subsection{Related work}
\label{sec:related-work}

The Bitcoin protocol was originally introduced in \cite{Nakamoto08} and was built based on ideas from \cite{Back-Hashcash} and \cite{Dai:1998:BMoney}. After the ``creation'' of Bitcoin several other alternative electronic currencies followed, known as \emph{altcoins} (http://altcoins.com/), e.g., litecoin, Primecoin etc. The Bitcoin white paper provides a probabilistic analysis of double spending attacks 
while a more detailed analysis can be found in \cite{Rosenfeld-Doublespending}. Bitcoin's design and research challenges are discussed in \cite{BonneauMillerEtal-SoKsurvey15} along with a presentation of the existing research. In \cite{tschorsch:2015:BitcoinAndBeyond} an extensive and more introductory survey on distributed cryptocurrencies can be found.


The works most relevant to ours are  \cite{KrollDaveyFelten:2013:Economics}, \cite{Eyal2014MajorityIsNotEnough} and \cite{SapirsteinSompolinskyZohar-OptimalStrategies}.
In \cite{KrollDaveyFelten:2013:Economics}, the equilibria of the Bitcoin game are considered. The authors observe that any \emph{monotonic} strategy is a Nash equilibrium (one of many).
Their analysis though is quite restricted: as we show, even in the case of the immediate release
game, \frontier\ is not best response for values above $0.455$.
%
In  \cite{Eyal2014MajorityIsNotEnough}, the authors prove that a guaranteed majority of honest miners is not enough to guarantee the security of the Bitcoin protocol. In particular they present a specific strategy called the ``Selfish Mine'' strategy and examine when this strategy is beneficial for a pool of miners. It appears that a fraction of   $1/3$ of the total processing power is always enough for a pool of miners to benefit by applying the Selfish Mine Strategy no matter the block propagation characteristics of the network ($\gamma=0$ in their setting). Hence, this constitutes a profitable attack against the Bitcoin protocol.
Furthermore, when   block propagation is favoring the attacker the threshold above which ``Selfish Mine'' is beneficial is any non-zero value.
It is not hard  to see that the ``Selfish Mine'' strategy does not fully exploit settings
where block propagation is more favorable to the attacker. This was observed in
\cite{GarayKiayiasLeonardos-Backbone} where an optimal attack against the property of
chain quality was considered in their setting where the network is considered
to be completely adversarial.
In \cite{SapirsteinSompolinskyZohar-OptimalStrategies} the authors consider a wider set
of possible strategies that includes the ``Selfish-Mine'' strategy and explore
this space computationally.
Their analysis also accounts for possible communication delays in the network, the presence of which can diminish the profit threshold.

A vast majority of previous work examines possible types of attacks against the Bitcoin protocol and suggest adaptations of the protocol to ensure its security. We very briefly mention some of these works here.

Successful pool mining related attacks are discussed in \cite{Rosenfeld-RewardSystems} and \cite{CourtoisBahack:2014:BlockWithholding}. In \cite{Eyal:2014:Dilemma}
the author considers attacks performed between different pools where users are sent to infiltrate a competitive pool giving raise to a \emph{pool game}. See also \cite{Lewenberg2015GameTheoretic} for a (cooperative) game theoretic analysis regarding pool mining.
\cite{Babaioff:2012:RedBalloons} deals with information propagation and Sybil attacks. 
The authors propose a reward scheme which will make it in the best interest of a miner to propagate the transactions he is made aware of and not duplicate. 
\cite{KrollDaveyFelten:2013:Economics}
considers an attack that can be performed from people that are only interested in destroying Bitcoin, as opposed to other attacks performed by users trying to increase their expected reward. This is called the \emph{Goldfinger} attack.
\cite{Heilman:2015:EclipseAttacks} focuses on the peer-to-peer network and examines \emph{eclipse} attacks where the attacker(s) isolates a node/user from the network 
and forces him to waste his computational power thus participating in an attack without even being aware. 
Certain deanonymization attacks have also recently been observed \cite{Meiklejohn:2013NoNames} by analysing the transactional graph (see also \cite{ReidHarrigan:2012:Anonymity} and \cite{Ron:2013:TransactionGraph}).


In \cite{SompolinskyZohar-SecureHighRate} an alternative consensus method is described, called Greedy Heaviest-Observed Sub-Tree or GHOST. A variant of GHOST has been adopted by Ethereum, a distributed applications platform that is built on top of block chains. 
\cite{EyalEtal-bitcoinNG} attempts to overcome scalability issues that arise in Bitcoin (block size and interval vs latency and stability) by proposing a new scalable blockchain protocol. 
\cite{PoonDryja-Lightning15} provides another alternative for scalability that utilizes off chain transactions while using the distributed ledger for maintaining the contracts between the parties that engage in off chain transactions. 
In \cite{GarayKiayiasLeonardos-Backbone} the authors analyze the Bitcoin protocol in depth. They abstract the core of the protocol that they term the {\em bitcoin backbone} and prove formally its main attributes and properties, which can be used as building blocks for achieving goals other than simply maintaining a public ledger. 
They show that when the propagation delay in the network is relatively small, an honest majority of users is enough to guarantee smooth operation in a cryptographic model where the rationality of the players is not considered.


\section{The Bitcoin Mining Game and its
  Variants} \label{sec:bitcoin-mining-game}

The game-theoretic issues of bitcoin mining can be captured by the
following game-theoretic abstraction. The parameters of the game are:
\begin{itemize}
\item the number $n$ of miners or players
	
\item the probabilities $p=(p_1,\ldots,p_n)$ that miners succeed in
  solving the crypto-puzzle; these are proportional to their
  computational power and they sum up to 1: $\sum_{i=1}^n p_i=1$.

\item the depth of the game $d$; the payment for mining a new block is
  not paid immediately, but only after a chain of certain number of
  new blocks is attached to it; in the current implementation of the
  Bitcoin protocol this number is $d=100$. We will mainly consider
  games with $d=\infty$, but we will discuss briefly how they are
  affected by this parameter.
\end{itemize}

Two more parameters could play a role in a more general model of the
protocol.  The computational cost $c^*$ of mining a new block and the
reward (payment) $r^*$ for it (currently
approximately 25 bitcoins). Here we assume that the reward $r^{*}$ is
constant and we scale all payments so that $r^{*}=1$. Also, if the
expected gain is high enough to entice a miner to
participate,\footnote{ The protocol must satisfy Individual
  Rationality. Rational participating players should have non-negative
  expected utility.} its actual value is not important, since the
miner tries to maximize revenue.

Note that because of the distributed nature of the
  Bitcoin protocol, it is possible that more than one miners succeed
  almost simultaneously to mine a new block. We choose to ignore this
  aspect here; nevertheless, it is easy to generalize our model to a setting
where $\sum^n_{i=1} p_i < 1$ and thus there is a non-zero probability at each
step that no miner will be awarded a block.

During the execution of the protocol, the miners build a tree of nodes/blocks to which they try to add more blocks. The protocol aims to
increase the height of this tree by one every time a new node is created (every ten minutes on average) but this is not necessarily consistent with the incentives of the players who might chose to mine blocks that are not the deepest ones of the tree. Once a
miner succeeds in creating a block, the new node is supposed to be added to the tree. However if the
miner is strategic, he may have reasons not to add the newly
discovered node to the tree. Therefore besides the publicly known
tree, each miner might have his own private tree.

\begin{definition}[State]
  A public state is simply a rooted tree. Every node is labeled by
  one of the players. The nodes represent mined blocks and the label
  indicates the player who mined the block. Every level of the tree
  has at most one node labeled $i$ because there is no reason for a
  player to mine twice the same level.

  A private state of player $i$ is similar to the public state except it may contain more nodes called private nodes and labeled by
  $i$. The public tree is a subtree of the private tree and has the
  same root.
\end{definition}

In the incomplete information case, the private states may also
include the partial knowledge that players have about the other
players (knowledge about the probabilities of other private trees, but
also about their knowledge etc). This is a very complicated case, and
we do not treat it in this work. Instead we treat two complete-information cases in which the private states of all miners are common knowledge:
\begin{description}
\item[Immediate-release model] Whenever a miner succeeds in mining a
  block, he releases it immediately, and all miners can continue
  from the newly mined block.
\item[Strategic-release model] Whenever a miner succeeds in mining a
  block, it becomes common knowledge. However, the miner may decide to
  postpone the release of the block. Until the block is released,
  other miners cannot continue mining from this block, although they
  are aware of its existence.
\end{description}
While the second model has no counterpart in practice,
we believe it is of high theoretical interest as it can serve
as an intermediate model between immediate-release and strategic-release
with incomplete information. The immediate-release model enables the study
of miners that follow the protocol in terms of block propagation but
mine strategically, while the strategic release allows us to extend
the study to the game-theoretic issues of block witholding.
Although ideally we would want to study the latter
under the incomplete information regime, we will defer this for future
work since the game becomes substantially more complex to analyze.
It is important to stress that any strategy in the
full information setting is also a valid strategy in the
incomplete information setting. Importantly, if a strategy is not
dominant in the full information setting it cannot be dominant
in the incomplete information setting.

We describe the set of strategies for the strategic-release case. The
immediate-release case is the special case in which the release
function has been fixed to immediate-release.
\begin{definition}[Strategy]
  A pure strategy of player $i$ consists of two functions ($\mu_i$,
  $\rho_i$):
  \begin{itemize}
  \item the mining function $\mu_i$ which selects a node of the
    current public state to mine.
  \item the release function $\rho_i$ which is a (perhaps empty)
    private part of the player's state which is added to the public
    state.

    Both functions depend on the state of knowledge of the miner.
  \end{itemize}
\end{definition}
The suggested strategy by the designer of the protocol is the
\frontier\ strategy.
\begin{definition}[\frontier]
  We say that a miner follows the \frontier\ strategy when he releases
  any mined block immediately and selects to mine one of the
  deepest nodes.
\end{definition}
In the expected execution of the
protocol in which all players play the \frontier\ strategy, the
Bitcoin protocol creates a path. This is due to our simplifying
assumption that no players simultaneously mine a block. In practice,
the \frontier\ strategy creates something very close to a path,
with occasional ``orphan'' blocks hanging from it.

The game is played in phases. In each phase, each player $i$ uses his
mining function $\mu_i$ to select a block to mine. We assume that
exactly one player succeeds in mining a block in each
phase
, and that
the probability of success for each player is given by the
probabilities $(p_1,\ldots,p_n)$. The winner then adds the newly mined
block to his private tree as a leaf hanging from a previously existing node. He
then applies his release function which may add some of his private
part of the tree (for example, the newly mined node) to the public
tree. This may trigger a cascade of releases from other
players. When the dust settles, we will have a new public tree, and
each player will have updated knowledge\footnote{The releasing step
  is non-deterministic and, depending on the release functions and the
  order of applying the release functions, may lead to different outcomes. However, this never happens in the cases we analyze
  here.}. The phase ends at this point and a new phase begins.

Note that it is possible that the release function of the winning
player may result in an empty release. Since here we consider the
complete information case, all miners can immediately detect the end
of the phase. In the incomplete information case however this is not
possible, although the miners can estimate the probability of this
happening, and that adds another complication in modeling strategic
considerations in the incomplete information regime.

Payments for mining new blocks are essential to incentivize the
players to try to mine new blocks. A miner who succeeds in mining a
block is paid $d$ phases later (currently $d=100$); the delay is
considered sufficient to guarantee that no long branches off the main
path exist. The description of the payment scheme seems sufficient
\emph{under the assumption that branches become stale quickly and that only
  the main trunk survives}. With the term \emph{trunk} we refer to a long
path with ignored stale branches (i.e., the sibling of a node that is paid for as
well as its descendants will get no payment at any point in time, thus
they are effectively deleted).

A rigorous game-theoretic analysis of the Bitcoin protocol is quite
 complicated because of the potential strategic branching, and it
requires a more precise definition of payments. To be consistent with
the non-game-theoretic considerations of the Bitcoin protocol, we
assume that at every level (i.e., height of the tree) only one node is
paid for, the first one which succeeds in having a descendant $d$
generations later. In graph-theoretic terms, a node $u$ is paid for when
its path from the root is extended by a path of length $d$; when this
happens every sibling (as well as its descendants) of node $u$ becomes
stale.

\begin{definition}[Payments]
  For some nodes of the tree, the miners who discovered them will get
  a fixed payment (normalized to 1). The payments comply with the following rules:
  \begin{itemize}
  \item the nodes that receive payment must form a path from the
    root. This immediately adds the restriction that at every level of
    the tree exactly one node receives payment.
  \item among the nodes of a single level that satisfy the above path
    restriction, the first one which succeeds in having a descendant
    $d$ generations later receives payment.
  \end{itemize}
  Since only one node per level is paid for, the utility of a miner in
  the long run is defined as the fraction of the total payment which
  he receives (his paid nodes over the total number of paid nodes).
\end{definition}

When a node is paid for, rational miners will completely ignore every
branch that starts at an earlier node. So in the long run, the tree
essentially becomes the trunk with a small tree of depth at most $d$ at the end. We will
call such a game \emph{truncated at level $d$}. Immediate-release
truncated games are \emph{finite stochastic games}.

We will also consider games that are not truncated at a specific level
$d$. We have to do this with care, since it is possible that two or
more miners will continue expanding their own branch forever and they
will never agree. However, in our games this cannot happen in an
optimal play when one miner has probability less than $1/2$. The
reason is that the utility of a miner is the fraction of the total
payment he receives which is expected to be 0 if he keeps mining his
own branch forever (a case of gambler's ruin).


\section{The Immediate-Release Game}
In this section we determine the conditions which guarantee that the suggested \frontier\ strategy is a Nash equilibrium. We fix the strategy of all but one miners to \frontier\ and identify when \frontier\ is the best response of the remaining miner.

We can assume that all miners who follow the \frontier\ strategy by
assumption act as a single miner (as mentioned, we consider only the simplified setting where
miners do not simultaneously produce blocks). This gives rise to a two-player
(two-miner) game: Miner 1 is the miner whose optimal strategy (best
response) we wish to determine and has relative computational power
$p$ ($p$ fraction of the total computational power), while Miner 2 is assumed to follow the \frontier\ strategy and have collective relative computational power $1-p$.

A (public) state is simply a rooted tree of width at most 2. In the
immediate-release case, after pruning away stale (abandoned)
branches, the state is a long path (called trunk) followed by two
branches, one for each miner of lengths $a$ and $b$ (see
Figure~\ref{fig:states}). The lengths of these two branches determine
the state and can be $0$. Also because Miner 2 plays \frontier, his
path must be the longest one, except temporarily when Miner 1 mines a
block and moves ahead; in this case we have $a=b+1$, and when this
happens, Miner 2 abandons his path and continues from the frontier of
the other path. To summarize, the states of the game are the pairs
$(a,b)$ with $0\leq a\leq b+1$.

\begin{figure}
  \centering
  \includegraphics[scale=0.3]{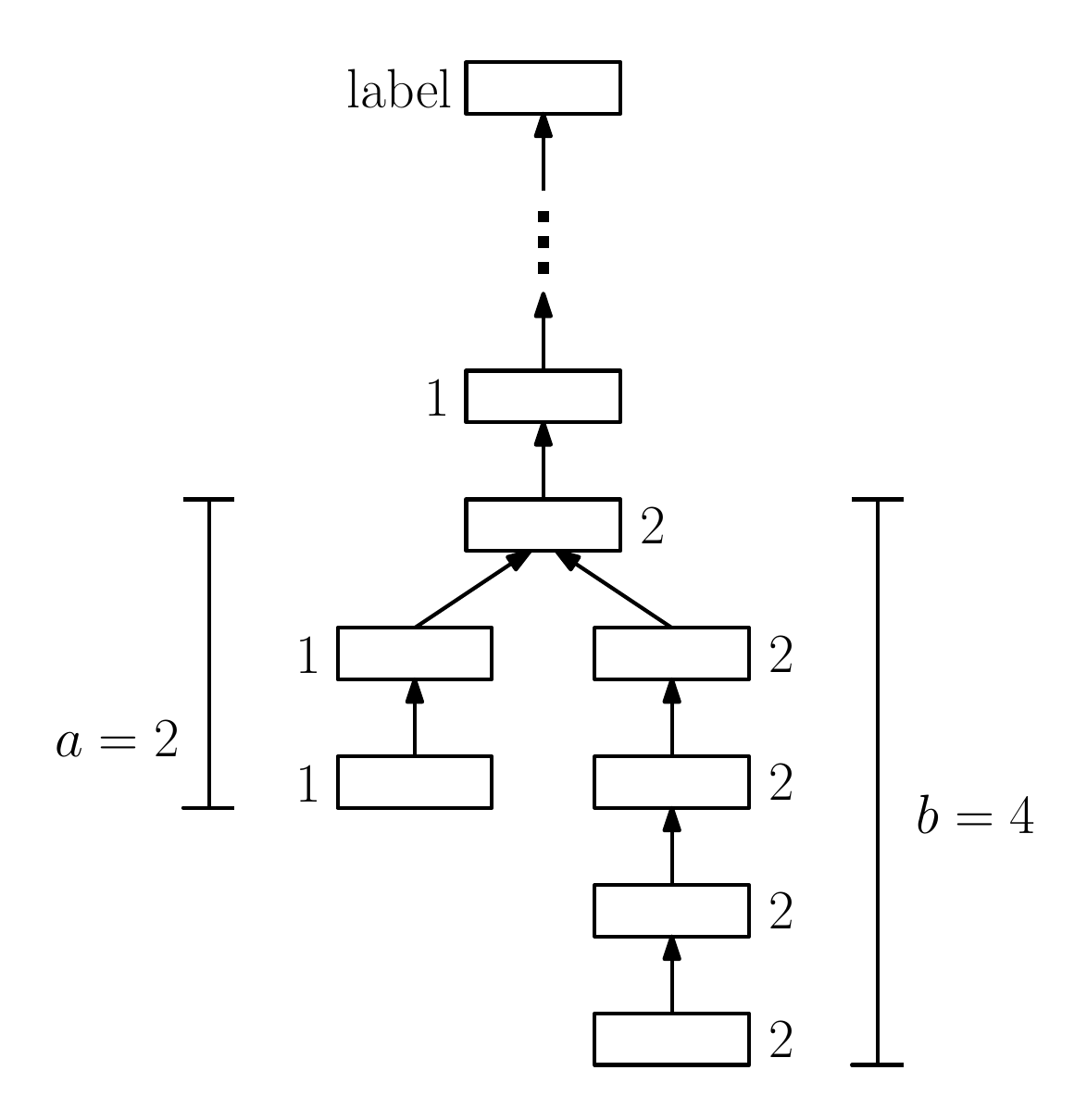}
  \hspace*{4cm}
  \includegraphics[scale=0.5]{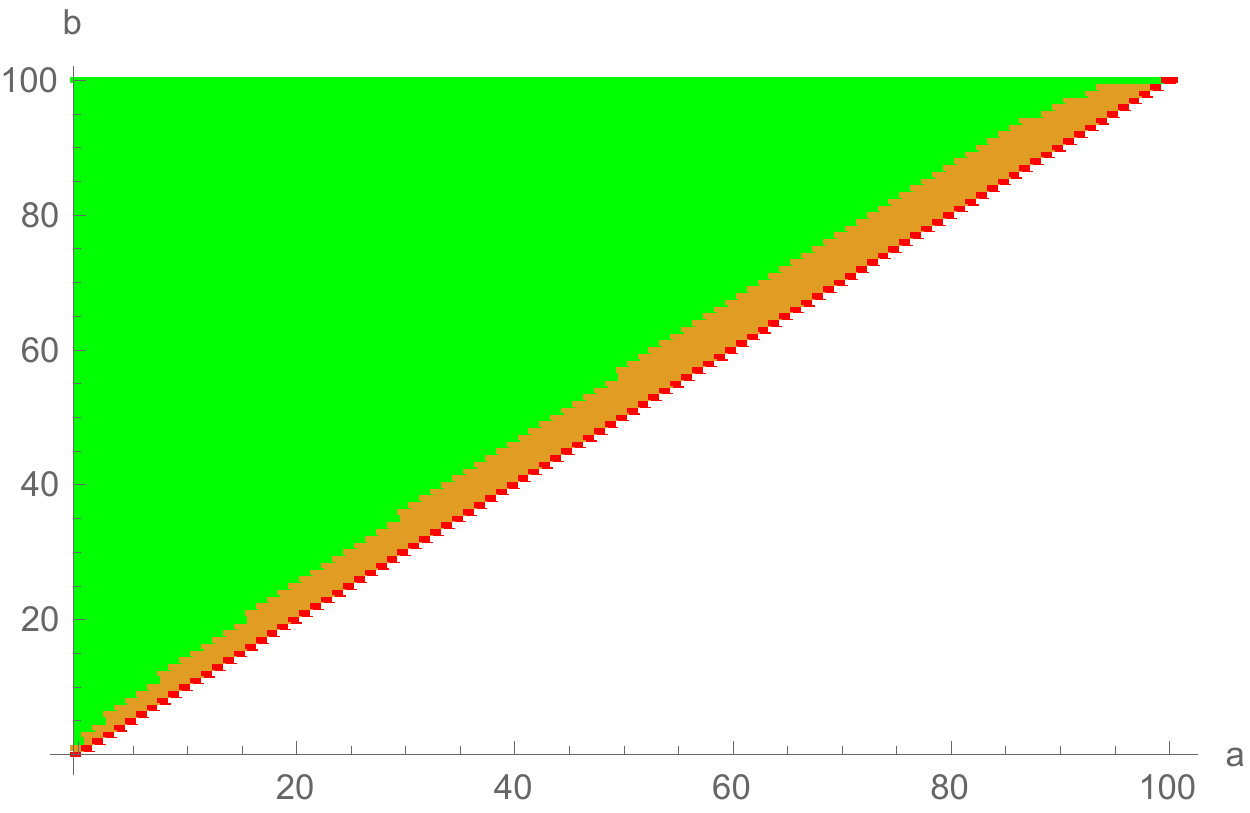}
  \caption{One the left, a typical state (tree). On the right, the set
    of states form the optimal strategy for the truncated game at
    $d=100$. The upper-left green part is the set $C$ of capitulation
    states, the diagonal red line is the set $W$ of winning states,
    and the orange part the set $M$ of mining states.}
  \label{fig:states}
\end{figure}

The set of states $(a,b)$ can be partitioned into three parts (see
Figure~\ref{fig:states}):
\begin{description}
\item[Mining states:] the set $M$ in which both miners keep
  mining their own branch. State $(0,0)$ belongs to $M$. Depending on the strategy of miner 1, additional states
  might be included in $M$.
\item[Capitulation states:] the set $C$ of states in which Miner 1
  gives up on his branch and continues mining from some block of the
  other branch. When the game is truncated at depth $d$, this set
  includes all states of the form $(a,d)$, for $a=0,\ldots,d$.
\item[Winning states:] the set $W$ of states in which Miner 2
  capitulates. Given that miner 2 plays \frontier\ it holds
  that $W=\{(a, a-1)\,:\, a\geq 1 \}$.
\end{description}

Note that when Miner 1 capitulates and abandons his own branch, he
can choose to move to any state $(0,s)$. Since the miner is rational,
he will select the best of these states (we assume that in case of a
tie, he will always select the same state). It follows that without loss of generality we can describe
fully the strategy of Miner 1 by specifying the set of mining states
$M$ and the state $(0,s)$ that is chosen when Miner 1 capitulates. The \emph{set of
  deterministic strategies} of Miner 1 is exactly the set of pairs
($M$, $s$), where $M$ is the set of mining states and $(0,s)$ is the
landing state to which the miner jumps from any capitulation state.

Let $g_k(a,b)$ denote the expected gain of Miner 1 when the branch of the honest miner in the execution tree is extended by $k$ new
levels starting from an initial tree in which the
branches of Miner 1 and 2 have lengths $a$ and $b$, respectively. It
should be intuitively clear that, in the long run, the expected gain
per level should be almost independent of the initial state. As a result, for large $k,k'$, it should be that $g_k(a,b) - g_{k'}(a,b)$
is independent of $a,b$ and thus it is merely a function of $k,k'$.
Furthermore, assuming that  constant rewards are given
and expenses per block created remain constant, it will hold that $g_{k}(a,b)$
satisfies for any large $k,k'$,
$$\frac{g_k(a,b) - g_{k'}(a,b)}{k-k'} = g^*.$$
for some constant $g^*$ which represents the \emph{expected gain per level}
in the long run.  Based on this we can define the expected gain as
$$g_k(a,b)=k\cdot g^* + \f(a,b),$$
where the potential function
$\f(a,b)=\lim_{k\rightarrow \infty} g_k(a,b) - k\cdot g^*$ denotes the
advantage of Miner 1 for currently being in state $(a,b)$; that this
limit exists follows from a straightforward argument\footnote{Since
  state $(0,0)$ is clearly recurrent, we can alternatively define
  $\f(a,b)$ as the expected value $g_k(a,b) - k\cdot g^*$ until state
  $(0,0)$ is reached.}. The objective of Miner 1 is to maximize
$\g$.

For a strategy $(M,s)$, we can define $g_k(a,b)$ recursively as
follows. When the current state is $(a,b)$, there are three
possibilities:
\begin{itemize}
\item If $(a,b)\in M$, both miners continue mining. With probability
  $p$, Miner 1 succeeds to mine the next block first and the new
  state is $(a+1,b)$; with the remaining probability Miner 2
  succeeds and the new state is $(a,b+1)$.
\item If $(a,b)\in C$, Miner 1 abandons his branch and the new state
  is $(0,s)$.  The trunk increases by $b-s$ blocks.
\item If $(a,b)\in W$, Miner 2 abandons his branch and the new state
  is $(0,0)$. Miner 1 receives payment for $a$ blocks. The trunk increases
  by $a$ blocks.
\end{itemize}
The frontier advances when either Miner 2 wins or when Miner 1 wins
and $a=b$, thus reaching a winning state. From the above consideration, we get
\begin{equation*}
    g_k(a,b)=
    \begin{cases}
       g_{k-1}(0, 0) + a  & a=b+1 \\
       \max\left( \max_{s=0,\ldots,b-1} g_k(0,s), p g_k(a+1, b) + (1-p) g_{k-1}(a,
       b+1) \right) & \text{otherwise}
    \end{cases}
\end{equation*}
and by definition $g_0(a,b)=0$. From this, we can get a similar recurrence for $\f$:
\begin{eqnarray*}
    &&\f(a,b)=\begin{cases}
       \f(0, 0) + a-\g  & a=b+1 \\
       \max\left( \max_{s=0,\ldots,b-1} \f(0,s), p \f(a+1, b) + (1-p) \f(a,
       b+1) - (1-p) \g\right) & \text{otherwise}
    \end{cases}
\end{eqnarray*}We also fix $\f(0,0)=0$; note that the potential of all states is non-negative.

We note that the above definitions do not take depth $d$ into account (consider $d=\infty$). This is without loss of generality for the proof of Theorem \ref{thm:lower-bound} where we use this recurrence, as proving that \frontier\ is best response for $d=\infty$ (i.e., when Miner 1 has a superset of available winning paths than for any constant $d$) implies that the result also holds for any constant $d$.



\subsection{\frontier\ is a NE iff \boldmath$p\leq h_0$, where $0.361\leq h_0\leq 0.455$}
If every miner plays \frontier, their expected gain is proportional to
the probability of mining a block and therefore proportional to their
relative computational power $p_i$. From this we get the following
proposition.
\begin{proposition}\label{prop:frontierNE}
  \frontier\ is a Nash equilibrium if, having fixed the strategy of
  all miners except $i$ to \frontier, the best response of a miner $i$
  has expected gain per level equal to $p_i$.
\end{proposition}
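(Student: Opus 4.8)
The plan is to unpack what it means for \frontier\ to be a Nash equilibrium in this model and reduce it to the clean statement about expected gain per level. First I would recall the setup: a profile of strategies is a Nash equilibrium precisely when no single miner can strictly improve his utility by unilaterally deviating while all others stay fixed. Here the utility of miner $i$ is the long-run fraction of paid blocks he receives, so the condition ``\frontier\ is a Nash equilibrium'' means that for every $i$, fixing the others to \frontier, miner $i$ cannot do strictly better than \frontier. The key observation driving the proposition is the remark that opens the subsection: when \emph{all} miners play \frontier, the blockchain grows as a single path, exactly one block is paid per level, and the block at each level is mined by miner $i$ with probability $p_i$; hence by linearity of expectation the long-run fraction of paid blocks going to $i$ equals $p_i$, i.e. his expected gain per level under the all-\frontier\ profile is exactly $p_i$.

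Next I would carry out the reduction to a single miner's best-response problem. Since the utility of miner $i$ depends only on his own strategy and the strategies of the others, and all others are pinned to \frontier, the question of whether $i$ can profitably deviate is exactly the question of computing $i$'s best response against a field of \frontier\ miners. As the excerpt notes, the collection of honest miners can be treated as a single aggregate miner of power $1-p_i$, so this is precisely the two-miner game whose value is the optimal expected gain per level $\g$ for miner $i$. Therefore \frontier\ is a Nash equilibrium if and only if, for every miner $i$, this best-response value $\g$ does not exceed the value $p_i$ that \frontier\ itself achieves. Because \frontier\ is always an available strategy to miner $i$, the best response can never yield \emph{less} than $p_i$; so the only way the best-response value can fail to be a strict improvement is for it to equal $p_i$. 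This gives exactly the stated equivalence: best response has expected gain per level equal to $p_i$.

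The main subtlety I would be careful about is making the ``expected gain per level'' comparison well-posed in the infinite-depth ($d=\infty$) regime, where total utilities are themselves infinite and one must compare rates rather than totals. This is precisely where the earlier development of $g^*$ and the potential function $\f$ does the work: utility comparisons between strategies are captured by the per-level rate $\g$, with the potential absorbing the bounded transient advantage of the starting state. I would invoke the fact, established earlier in the excerpt, that $\g$ is a well-defined constant (using recurrence of state $(0,0)$ and the gambler's-ruin argument that a miner with power below $1/2$ cannot profit by branching forever), so that comparing $\g$ against $p_i$ is legitimate. With that in place, the proof is essentially a definitional unwinding: the honest-profile value is $p_i$, the best-response value is $\g$, and Nash equilibrium is equivalent to $\g = p_i$ for all $i$.

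I expect the genuinely delicate step to be justifying that the all-\frontier\ execution really does yield exactly $p_i$ per level despite the possibility of ties at a winning state; but under the paper's simplifying assumption that no two miners succeed simultaneously and that ties in capitulation are broken deterministically, the honest execution is a single path and the $p_i$ claim is immediate. Everything else is the equivalence bookkeeping described above.
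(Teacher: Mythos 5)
Your proposal is correct and takes essentially the same route as the paper, whose entire argument is the one-line observation that under the all-\frontier\ profile each miner's expected gain per level equals $p_i$, from which the proposition follows definitionally since a best response achieving exactly $p_i$ means \frontier\ (which attains $p_i$) is itself a best response. Your extra care about well-posedness of the per-level rate at $d=\infty$ and the converse direction ($\g\geq p_i$ always, giving the ``iff'' of the subsection title) goes beyond what the paper writes down but is fully consistent with it.
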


In this section we bound the threshold on the computational power of each miner so that \frontier\ is a Nash equilibrium. The main result is the lower bound that follows.
\begin{theorem} \label{thm:lower-bound} In the immediate-release
  model, \frontier\ is a Nash equilibrium when every miner $i$ has
  relative computational power
  $p_i\leq \frac{1}{3} (1 - 8/(1 + 3 \sqrt{57})^{1/3} + (1 + 3
  \sqrt{57})^{1/3})\approx 0.361$ (root of the polynomial $2p^2-(1-p)^3$).
\end{theorem}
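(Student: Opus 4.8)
The plan is to invoke Proposition~\ref{prop:frontierNE}: since \frontier\ already secures Miner~1 an expected gain per level of exactly $p$, it suffices to prove the matching upper bound $\g \le p$, i.e.\ that no deviation of Miner~1 against an honest Miner~2 earns more than the fair share $p$. In the reduced two-player game on states $(a,b)$ the only genuine choice at a non-winning state is whether to \emph{capitulate} (adopt Miner~2's frontier) or to keep \emph{chasing} on his own branch, hoping to reach a winning state $(a,a-1)$ and cash in all $a$ of his blocks. So the entire content is to show that for $p\le p_0$ (with $2p_0^2=(1-p_0)^3$) every chasing deviation is unprofitable.

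To obtain the upper bound I would run a verification (supersolution) argument on the recurrence for $\f$. Concretely, I would exhibit a function $\f\ge 0$ with $\f(0,0)=0$ that dominates both branches of the recurrence when $\g$ is replaced by the candidate value $p$: namely $\f(a,b)\ge a-p$ on winning states, $\f(a,b)\ge \max_{s}\f(0,s)$ for the capitulation option, and $\f(a,b)\ge p\,\f(a+1,b)+(1-p)\,\f(a,b+1)-(1-p)p$ for the mining option, at every state. Telescoping this inequality along any trajectory and dividing by the number of levels (the potential being bounded on the recurrent states) then yields $\g\le p$ for \emph{every} strategy of Miner~1, which is exactly what is required. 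The construction of $\f$ is organized by the reward structure: because a win at $(a,a-1)$ pays precisely $a$, the potential should be affine in the altitude $a$ along each diagonal of fixed deficit $j=b-a$. Writing $\f(a,a+j)=a\,u_j+v_j$, the altitude coefficients satisfy the second-order linear recurrence $u_j=p\,u_{j-1}+(1-p)\,u_{j+1}$ with $u_{-1}=1$, whose relevant solution is governed by the ratio $p/(1-p)$; this collapses the two-dimensional state space to a one-dimensional recurrence in the deficit, and the constants $v_j$ are then pinned down by the mining equation together with $\f(0,0)=0$.

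The main obstacle is exactly the construction of this potential. The temptation to chase grows with the altitude $a$, since the eventual payoff $a$ of a win is unbounded; in particular \frontier's own bias function (which is $0$ on every positive-deficit diagonal) is \emph{not} a supersolution at high altitude, so one cannot verify \frontier\ naively. The verifying potential must therefore be large enough on the competitive diagonals to dominate every chasing continuation, yet small enough to keep the average gain at $p$. Balancing these forces introduces an altitude-dependent capitulation boundary, and requiring that capitulation remain optimal at the binding competitive state reduces, after the geometric summation over deficits, to the scalar inequality $2p^2\le(1-p)^3$.

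Finally I would solve the boundary cubic $2p^2=(1-p)^3$ to obtain the stated closed form $p_0\approx 0.361$, and conclude via Proposition~\ref{prop:frontierNE} that \frontier\ is a Nash equilibrium whenever every miner has power $p_i\le p_0$; by the remark on depth in the excerpt, proving this for $d=\infty$ (the superset of winning paths) transfers the result to any constant $d$. Because the verifying potential is a convenient over-estimate rather than the exact value function, the resulting threshold is only a conservative lower bound, which is consistent with the experimentally observed value near $0.42$.
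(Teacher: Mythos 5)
Your high-level plan---reduce to showing $\g\le p$ via Proposition~\ref{prop:frontierNE}, and certify this bound by exhibiting an explicit nonnegative supersolution of the Bellman recurrence with $\g$ replaced by $p$---is sound, and it is genuinely different from the paper's proof of this theorem: the paper never builds a global potential here, but instead bounds the \emph{true} optimal potential at the three states $(1,2)$, $(0,2)$, $(0,1)$ (via the simulation bound of Corollary~\ref{cor:bound-restriction} and the gambler's-ruin estimate of Lemma~\ref{lem:win-probability}) and then rules out $(0,2)$ and $(0,1)$ being mining states by a negativity contradiction. Your verification route is the one the paper reserves for the strategic-release Theorem~\ref{thm:strategic-lower-bound}. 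The problem is that your proposal stops exactly where the proof would have to start: the supersolution is never exhibited, none of the three Bellman inequalities is verified, and the closing claim that the construction ``reduces, after the geometric summation over deficits, to $2p^2\le(1-p)^3$'' is asserted without any derivation; it does not follow from anything you wrote and appears to be reverse-engineered from the statement of the theorem.

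Moreover, the one concrete prescription you do give would fail. With the ansatz $\overline\f(a,a+j)=a\,u_j+v_j$ one indeed gets $u_j=\left(\frac{p}{1-p}\right)^{j+1}$, but ``pinning down the $v_j$ by the mining equation'' (i.e.\ imposing equality in the mining relation) forces $v_{j+1}-v_j\to p(1-p)/(1-2p)>0$, so $v_j$ grows linearly in $j$, and then the capitulation requirement $a\,u_j+v_j\ge\max_{s<a+j}v_s$ is violated at states where $a$ and $j$ are both large, precisely because $u_j\to 0$. The correct completion is the opposite choice, $v_j\equiv 0$ for $j\ge 0$ and $v_{-1}=-p$, i.e.
\begin{equation*}
\overline\f(a,b)=a\left(\frac{p}{1-p}\right)^{1+b-a}\quad(0\le a\le b),\qquad \overline\f(b+1,b)=b+1-p,
\end{equation*}
for which the winning and capitulation dominations are immediate, the mining inequality holds with equality on the diagonal $a=b$ (using $p+(1-p)q^2=q$ with $q=p/(1-p)$), and for $a<b$ it reduces to $q^{b-a}\le 1-p$, binding at $b-a=1$. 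Combined with the induction on $k$, then on $b$, then backwards on $a\le b+1$ (this is how the paper makes such telescoping rigorous in Theorem~\ref{thm:strategic-lower-bound}; your ``divide by the number of levels, the potential being bounded on recurrent states'' is not valid as stated, since $\overline\f$ is unbounded along never-capitulating trajectories), this proves that \frontier\ is a best response whenever $p\le(1-p)^2$, i.e.\ $p\le(3-\sqrt{5})/2\approx 0.382$---a threshold that is actually \emph{stronger} than the $2p^2\le(1-p)^3$ bound of the theorem. So your framework is viable, but as written the proof has a hole where its main construction should be, and the inequality it would actually yield is not the one you claim.
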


Starting in a mining state, one of the two miners will eventually
capitulate and join the other branch.\footnote{This happens even when
  $d$ is unbounded. For a miner with relative computational power
  $p_i<1/2$ who never capitulates, the expected gain is 0, as he is
  engaged in a gambler's ruin
  situation. Lemma~\ref{lem:win-probability} shows that the
  probability of winning drops exponentially with the distance of his
  branch from the frontier.} The probability that the miner wins such
a race plays a significant role in our analysis. A formal definition follows.

\begin{definition}
  Let $r_{M,s}(a,b)$ denote the winning probability starting at state
  $(a,b)$, that is, the probability that a winning state will be
  reached before a capitulation state.

  We will use the notation of $r(a,b)$ for the optimal strategy
  ($M$, $s$), and $r_{\infty}(a,b)$ when the miner never capitulates
  (i.e., when $C=\emptyset$).
\end{definition}

The probability $r_{M,s}(a,b)$ can be defined recursively as follows:
\begin{equation*}
  r_{M,s}(a,b)=
  \begin{cases}
    p\cdot r_{M,s}(a+1,b) + (1-p)\cdot r_{M,s}(a, b+1), & (a,b)\in M, \\
    1, & (a,b)\in W, \\
    0, & (a,b)\in C.
  \end{cases}
\end{equation*}

The next simple lemma which bounds the probability $r(a,b)$, plays a
central role in our analysis:
\begin{lemma}\label{lem:win-probability}
The following holds for every state $(a, b)$:
\begin{equation*}
  r(a,b)\leq r_{\infty}(a,b)  = \left(\frac{p}{1-p}\right)^{1+b-a}.
\end{equation*}
\end{lemma}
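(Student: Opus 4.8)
The plan is to reduce the two-dimensional process on states $(a,b)$ to a one-dimensional biased random walk on the gap $X=b-a$, and then handle the two quantities separately: establish the inequality $r(a,b)\le r_{\infty}(a,b)$ by a monotonicity/coupling argument, and evaluate $r_{\infty}(a,b)$ exactly by a gambler's-ruin computation.

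First I would observe that on any mining state the dynamics depend only on the gap: with probability $p$ Miner~1 mines and $(a,b)\mapsto(a+1,b)$, decreasing $X=b-a$ by one, while with probability $1-p$ Miner~2 mines and $(a,b)\mapsto(a,b+1)$, increasing $X$ by one. A winning state is exactly $X=-1$ (since $W=\{(a,a-1):a\ge 1\}$), so reaching $W$ is the event that this walk, which drifts upward because $p<1/2$, ever descends to level $-1$. Writing $\tau_W$ and $\tau_C$ for the first hitting times of the winning and capitulation sets of the optimal strategy, we have $r(a,b)=\Pr[\tau_W<\tau_C]$.

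For the inequality, the key point is that the states are partitioned into $M$, $W$, $C$, so the process started at $(a,b)$ follows the gap walk until it is absorbed in $W$ or $C$. The strategy $C=\emptyset$ defining $r_{\infty}$ keeps mining precisely on those states where the optimal strategy would capitulate, so I would couple the two processes to the same trajectory of the gap walk. Under this coupling every trajectory reaching $W$ before $C$ also reaches $W$ at a finite time, i.e. $\{\tau_W<\tau_C\}\subseteq\{\tau_W<\infty\}$, which yields $r(a,b)\le\Pr[\tau_W<\infty]=r_{\infty}(a,b)$ at once. It then remains to evaluate $r_{\infty}(a,b)=\Pr[\tau_W<\infty]$: setting $\rho=p/(1-p)<1$, the probability that the walk ever descends one level below its current position satisfies $\rho=p+(1-p)\rho^2$ (descend immediately, or step up once and then descend twice), whose relevant root is $\rho=p/(1-p)$; equivalently $f(x)=\rho^{x+1}$ is the minimal nonnegative solution of $f(x)=p\,f(x-1)+(1-p)\,f(x+1)$ with $f(-1)=1$. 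Since descending from gap $b-a$ to $-1$ means descending $1+b-a$ levels, this gives $r_{\infty}(a,b)=\rho^{1+b-a}=(p/(1-p))^{1+b-a}$, the claimed value.

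The part I expect to require the most care is the monotonicity step: one must verify that the optimal strategy and the never-capitulate strategy genuinely agree as walks on the mining states, so that the coupling and the event inclusion $\{\tau_W<\tau_C\}\subseteq\{\tau_W<\infty\}$ are legitimate. The gambler's-ruin evaluation itself is routine, though it also merits a sentence confirming that $r_{\infty}$ is the \emph{minimal} (probabilistic) solution of the recurrence, so that it equals $\rho^{1+b-a}$ exactly rather than being merely bounded by a solution of that form.
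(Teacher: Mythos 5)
Your proposal is correct and takes essentially the same route as the paper's proof: both reduce the process to a one-dimensional biased random walk on the gap $1+b-a$ and evaluate $r_{\infty}$ by the standard gambler's-ruin formula, with the inequality $r(a,b)\leq r_{\infty}(a,b)$ coming from the observation that never capitulating only enlarges the set of trajectories that reach a winning state. Your explicit coupling and the minimal-nonnegative-solution justification of the root $p/(1-p)$ are simply rigorous fillings-in of steps the paper dispatches in one line (``more opportunities/paths'' and ``can be easily computed'').
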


\begin{proof}
  The inequality follows from the fact that the miner has more
  opportunities/paths to reach a winning state when he never gives up.

  Regarding the equality, we essentially want to solve the recurrence
  for $r_{\infty}$
  \begin{align*}
  r_{\infty}(a, b) &=
     \begin{cases}
       1 & \text{when $a=b+1$} \\
       p\cdot r_{\infty}(a+1, b)+(1-p)\cdot r_{\infty}(a,
       b+1) & \text{otherwise.}
     \end{cases}
  \end{align*}
  Consider the quantity $l=1+b-a$ which captures the distance of state
  $(a,b)$ from the set of winning states. This quantity decreases by
  one when the miner succeeds in mining a block, and increases by one
  when the opponent succeeds. Since we consider $r_{\infty}$, the case
  in which the miner never gives up, the situation is a gambler's ruin
  version: the quantity $l$ behaves as the position of a biased random
  walk on a half-line with an absorbing state at 0, probability $p$ of
  moving away from the absorbing state, and probability $1-p$ of
  moving towards the absorbing state. The probability
  $r_{\infty}(a,b)$ is the probability of reaching the absorbing state
  starting at position $l=1+b-a$ which can be easily computed to be
  $(\frac{p}{1-p})^l$.
\end{proof}

The next lemma gives a very simple necessary and sufficient condition for the potential function so that \frontier\ is a Nash equilibrium. Intuitively, the condition states that Miner 1 capitulates from state $(0,1)$, so no other state except $(0,0)$ will ever be reached.

\begin{lemma} \label{thm:BR-and-potential}
  Strategy \frontier\ is a best response for Miner 1 if and only if $\f(0,1)=\f(0,0)$.
\end{lemma}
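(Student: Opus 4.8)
The plan is to reduce the qualitative statement ``\frontier\ is a best response'' to the single scalar identity $\g = p$, and then to show that this identity is equivalent to $\f(0,1)=\f(0,0)$. The starting observation is that the potential $\f$ defined by the displayed recurrence is the value function of Miner 1's \emph{optimal} play, and $\g$ is by definition the optimal expected gain per level; since Miner 1's objective is exactly to maximize this quantity, \frontier\ is a best response precisely when playing \frontier\ attains the optimal gain per level. On the other hand, when Miner 1 plays \frontier\ against a \frontier\ opponent the only states ever visited are $(0,0)$ and the transient winning state $(1,0)$: the frontier advances by exactly one block each phase, and Miner 1 collects a payment exactly when he mines that phase's block, so \frontier\ delivers gain per level exactly $p$ (the fair-share computation underlying Proposition~\ref{prop:frontierNE}). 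Consequently \frontier\ is a best response if and only if $\g = p$.

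First I would extract the exact relationship between $\g$ and $\f(0,1)$ by instantiating the recurrence at the root state $(0,0)$. Here $b=0$, so the capitulation option $\max_{s=0,\ldots,b-1}\f(0,s)$ ranges over the empty set and Miner 1 is forced to mine, giving $\f(0,0)=p\,\f(1,0)+(1-p)\,\f(0,1)-(1-p)\g$. The successor $(1,0)$ is a winning state ($a=b+1$), so its potential is $\f(1,0)=\f(0,0)+1-\g$. Substituting this and using the normalization $\f(0,0)=0$ collapses the two equations to the clean identity $\g = p + (1-p)\,\f(0,1)$. This is the computational heart of the argument, and it holds unconditionally for the optimal $\f$ and $\g$.

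With this identity in hand the lemma follows in both directions. Since every potential is non-negative, $\f(0,1)\geq 0$, so the identity reconfirms $\g\geq p$; moreover $\g=p$ holds exactly when $\f(0,1)=0=\f(0,0)$. Combining this with the reduction of the first paragraph, \frontier\ is a best response $\iff \g=p \iff \f(0,1)=\f(0,0)$, as claimed. The intuition matches the informal reading: $\f(0,1)=\f(0,0)$ says that capitulating from $(0,1)$ back to $(0,0)$ is (weakly) optimal, so maintaining a competing branch is never profitable and the optimal play coincides with \frontier.

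The step I expect to require the most care is the equivalence ``\frontier\ is a best response $\iff \g=p$''. This rests on two facts that are intuitively clear but deserve explicit justification: that the \frontier\ strategy really yields gain per level exactly $p$ (which needs the check that under \frontier\ the frontier advances by one per phase and Miner 1's expected per-phase reward is $p$), and that the $\g$ appearing in the recurrence is genuinely the supremum of Miner 1's achievable gain per level, so that matching it certifies optimality. Once these are pinned down, the remaining manipulation at $(0,0)$ is routine.
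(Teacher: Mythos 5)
Your proof is correct and takes essentially the same route as the paper: both instantiate the recurrence at state $(0,0)$, use the winning-state identity $\f(1,0)=\f(0,0)+1-\g$ to derive $\f(0,1)-\f(0,0)=(\g-p)/(1-p)$ (your form $\g=p+(1-p)\,\f(0,1)$ is the same identity under the normalization $\f(0,0)=0$), and conclude via the equivalence that \frontier\ is a best response iff $\g=p$ (Proposition~\ref{prop:frontierNE}). Your extra care in justifying that \frontier\ itself achieves gain per level exactly $p$, and that $\g$ is the optimal gain so $\g\geq p$, only makes explicit what the paper leaves to the cited proposition.
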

\begin{proof}
  From the definition of $\f$, we have:
  \begin{align*}
    \f(0,0) &= p\,\f(1,0)+(1-p)\,\f(0,1) -(1-p)\,\g \\
    \f(0,1) &= \f(0,0)+1-\g.
  \end{align*}
  It follows that $\f(0,1)-\f(0,0)=(\g-p)/(1-p)$. Since \frontier\
  is best response if and only if $\g=p$ (recall Proposition \ref{prop:frontierNE}), the lemma follows.
\end{proof}

We now derive a very useful relation between the expected optimal gain of a pair of states and the winning probability of one of them.
\begin{lemma}
  For every state $(a,b)$ and every nonnegative integers $c$ and $k$:
  \begin{equation*}
    g_k(a+c, b+c)-g_k(a,b)\leq c\cdot r(a+c,b+c).
  \end{equation*}
  Conversely for $p<1/2$, there is $\epsilon_{a,b}(k)$ which tends to 0 as
  $k$ tends to infinity, such that
  \begin{equation*}
    g_k(a+c, b+c)-g_k(a,b)\geq c\cdot \left(r(a,b)-\epsilon_{a,b}(k)\right).
  \end{equation*}
\end{lemma}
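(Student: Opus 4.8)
The plan is to prove both inequalities by a coupling argument that exploits the fact that $(a,b)$ and $(a+c,b+c)$ have the same ``distance to a win'' $l=1+b-a$, so the two races evolve in lockstep until one of them terminates. Concretely, I would run two copies of the game from the two initial states under a single source of randomness, arranged so that whenever the lower copy (started at $(a,b)$) is in state $(x,y)$, the upper copy (started at $(a+c,b+c)$) is in state $(x+c,y+c)$. This synchrony is preserved while both copies keep mining, since a success of Miner~1 or Miner~2 shifts both states identically and leaves $l$ unchanged; moreover, reading off the recurrence for $g_k$, the horizon counter is decremented in both copies at exactly the same phases (a Miner~2 success, or the resolution of a winning state), so the two copies also exhaust their horizon simultaneously.

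The key observation is what happens at the first phase in which the common race resolves. Because both copies share the same $l$, they reach a winning state in the same phase; there the lower copy is at $(x,x-1)$ and banks $x$ blocks while the upper copy is at $(x+c,x-1+c)$ and banks $x+c$, a difference of exactly $c$, after which both reset to $(0,0)$. If instead the race first reaches a capitulation state, both copies jump to the same landing state and coincide from then on; here I would use that an optimal capitulation joins the current frontier, i.e.\ the landing is $(0,0)$ (being further behind is never better, so $g_k(0,s)$ is non-increasing in $s$), which also guarantees the common landing state is feasible for the lower copy. In either case the two copies coalesce after the first resolution and contribute equally thereafter, so the entire discrepancy between them equals $c$ times the probability that the first resolution within the horizon is a win.

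For the upper bound I would fix the optimal strategy for the upper copy at $(a+c,b+c)$ and let the lower copy imitate its translated-down decisions until coalescence, then switch to the untranslated optimal strategy; this is a feasible strategy from $(a,b)$, so $g_k(a,b)$ is at least its value, which equals $g_k(a+c,b+c)-c\cdot P[\text{win within horizon}]$. Since the probability of winning within finitely many phases is at most the all-time winning probability of the optimal strategy, namely $r(a+c,b+c)$, the upper bound follows with no restriction on $p$. The lower bound is the mirror image: fix the optimal strategy for the lower copy at $(a,b)$, have the upper copy imitate its translated-up decisions, and deduce $g_k(a+c,b+c)\geq g_k(a,b)+c\cdot P[\text{lower copy wins within horizon}]$. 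Now the winning probability is bounded below, by $r(a,b)$ minus the chance that the race is still unresolved after $k$ phases, which I would take as $\epsilon_{a,b}(k)$.

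The main obstacle is controlling $\epsilon_{a,b}(k)$ and showing it tends to $0$, and this is precisely where the hypothesis $p<1/2$ enters. By Lemma~\ref{lem:win-probability} the quantity $l$ behaves as a biased walk drifting toward the boundary, so under the optimal strategy the race resolves almost surely and with a finite expected number of horizon-consuming phases; the tail probability that no winning or capitulation state has been reached after $k$ phases therefore decays to $0$, giving $\epsilon_{a,b}(k)\to 0$. The remaining delicate points are the coalescence bookkeeping, namely ensuring that the translated strategy is always feasible (handled by landing at the frontier) and that ``within horizon'' denotes the same event in both copies (handled by the observation that the horizon is consumed in lockstep), both of which I would verify directly from the defining recurrence for $g_k$.
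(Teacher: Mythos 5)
Your overall strategy coincides with the paper's: couple the game started at $(a,b)$ with the one started at $(a+c,b+c)$, let one copy imitate the other's translated optimal strategy, observe that the two trajectories' gains differ by exactly $c$ precisely when the shared race ends in a win, and then bound the within-horizon winning probability from above by $r(a+c,b+c)$ for the first inequality and from below by $r(a,b)-\epsilon_{a,b}(k)$ (with a gambler's-ruin tail bound giving $\epsilon_{a,b}(k)\to 0$ for $p<1/2$) for the second. Your treatment of the second inequality is sound: when the upper copy imitates the lower one, any landing state $(0,s)$ chosen by the lower copy from a state $(x,y)$ satisfies $s\leq y-1\leq y+c-1$, so it is automatically feasible from $(x+c,y+c)$ and the copies coalesce.

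The problem is in the step you added to handle capitulation in the \emph{first} inequality, which rests on a false claim. You assert that $g_k(0,s)$ is non-increasing in $s$, so that an optimal capitulation lands at $(0,0)$. The monotonicity goes the other way: by the defining recurrence, $g_k(0,b)\geq \max_{s=0,\ldots,b-1} g_k(0,s)$, because from $(0,b)$ the miner may jump to any $(0,s)$ with $s<b$ at no cost in time and with no forfeited blocks, while retaining the extra option of mining to orphan more of the opponent's blocks; hence $g_k(0,\cdot)$ is non-decreasing in $s$. Landing at the frontier is also not optimal in general: the paper's own upper-bound construction (the truncated $d=3$ strategy for $p\geq 0.455$) uses the landing state $s=1$, where $\f(0,1)=(\g-p)/(1-p)>0=\f(0,0)$, i.e.\ landing one block below the frontier is strictly better than joining it. Since the first inequality is asserted for \emph{every} $p$, you cannot appeal to your claim even in the regime of Theorem~\ref{thm:lower-bound}. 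This matters because the claim was precisely your mechanism for guaranteeing that the lower copy can mirror the upper copy's capitulation: if the upper copy at $(x+c,y+c)$ optimally lands at $(0,s^*)$ with $s^*\geq y$ (and by the correct monotonicity the deepest landing $s^*=y+c-1$ is always among the optimal choices), then the lower copy at $(x,y)$ has no feasible matching move, the copies fail to coalesce, and your identity ``discrepancy $=c\cdot\Pr[\text{win within horizon}]$'' acquires an uncontrolled nonnegative deficit $g_{k'}(0,s^*)-g_{k'}(0,y-1)$. The paper's write-up is admittedly terse at this point---it only notes that Miner~2's behavior, depending on $b-a$ alone, is unaffected by the translation---but it does not rest its argument on the false monotonicity statement; your proposal does, so this step needs a genuinely different justification.
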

\begin{proof}
  We focus on the first inequality since the proof of the second
  inequality follows from similar reasoning. Suppose that the current
  state is $(a,b)$ and Miner 1 continues playing not in the optimal
  way, but by simulating the strategy that he would follow had the
  current state been $(a+c,b+c)$. The crucial observation is that the
  simulation can be carried out because the strategy of the other
  miner, based only on the difference $b-a$, is unaffected.

  Let $\bar g_k(a,b)$ be the gain for the next $k$ levels using this
  potentially suboptimal strategy. Let also $\bar r$ be the
  probability that Miner 1 will reach a winning before a capitulation
  state (within the next $k$ levels). Then we must have
  \begin{equation*}
    \bar g_k(a,b) = g_k(a+c,b+c) - c\cdot \bar r.
  \end{equation*}
  Now, since the simulated strategy cannot be better that the optimal
  strategy, it is clear that $g_k(a,b)\geq \bar g_k(a,b)$. Furthermore, the probability $r(a+c,b+c)$, which is the
  probability that Miner 1 wins the race even if more than $k$
  levels are used, is at least equal to $\bar r$. These two bounds
  give the first part of the lemma:
  \begin{equation*}
    g_k(a,b)\geq \bar g_k(a,b)= g_k(a+c,b+c) - c\cdot \bar r \geq
    g_k(a+c,b+c)- c\cdot r(a+c,b+c).
  \end{equation*}
  The second inequality follows from similar considerations and in
  particular by starting in state $(a+c,b+c)$ and simulating the
  strategy as being in state $(a,b)$. The term $\epsilon_{a,b}(k)$ is
  needed because now we want to bound the probability $\bar r$ from
  below: $\bar r = r(a,b)-\epsilon_{a,b}(k)$, where
  $\epsilon_{a,b}(k)$ is the probability that Miner 1 will win the
  race in more than $k$ levels. This probability tends to 0 as
  $k$ tends to infinity; in particular for $p<1/2$, this is bounded by
  the probability that a gambler with initial value $a$ and
  probability of success $p<1/2$ will win against a bank of initial
  value $b$ after $\Theta(k)$ steps.
\end{proof}

The second part of the previous lemma will not be used in this work,
but it may prove helpful to tighten our results. The following
corollary is a direct consequence of the first part of the lemma.

\begin{corollary}\label{cor:bound-restriction}
  For any state $(a,b)$ and nonnegative integer $c$
\begin{align*}
\f(a,b)\geq \f(a+c,b+c)-c\cdot r(a+c,b+c).
\end{align*}
\end{corollary}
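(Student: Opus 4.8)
The plan is to obtain the inequality simply by passing to the limit $k \to \infty$ in the first inequality of the preceding lemma, unwinding the definition of the potential function along the way. Recall that $\f(a,b) = \lim_{k\to\infty}\left(g_k(a,b) - k\cdot\g\right)$, and that this limit is guaranteed to exist by the recurrence argument noted earlier (the one invoked in the footnote when $\f$ was first defined). Thus the entire task reduces to transporting an inequality that holds for every finite horizon $k$ into the limiting regime that defines $\f$.

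First I would take the first inequality of the lemma, namely $g_k(a+c,b+c) - g_k(a,b) \leq c\cdot r(a+c,b+c)$, and rearrange it so as to isolate $g_k(a,b)$:
\[
  g_k(a,b) \geq g_k(a+c,b+c) - c\cdot r(a+c,b+c).
\]
Next I would subtract the common quantity $k\cdot\g$ from both sides, which does not disturb the inequality:
\[
  g_k(a,b) - k\cdot\g \;\geq\; \left(g_k(a+c,b+c) - k\cdot\g\right) - c\cdot r(a+c,b+c).
\]
I would then let $k\to\infty$. The key observation is that the term $c\cdot r(a+c,b+c)$ is independent of $k$ and so is unaffected by the limit, while the left-hand side converges to $\f(a,b)$ and the parenthesized expression on the right converges to $\f(a+c,b+c)$, both directly by the definition of the potential. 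Since the inequality holds for every $k$, it is preserved in the limit, yielding exactly $\f(a,b)\geq \f(a+c,b+c)-c\cdot r(a+c,b+c)$.

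The only point that deserves care is the justification of the limit passage: one must confirm that both limits genuinely exist (which is precisely the content of the definition of $\f$ established earlier) and that, with $c$ held fixed, the finite-horizon inequality holds uniformly in $k$ so that the limit may be taken term by term. Because these facts are all supplied by the earlier development, I expect no real obstacle here — the corollary is, as the text advertises, essentially a one-line consequence of the lemma once the definition $\f(a,b)=\lim_{k\to\infty}\bigl(g_k(a,b)-k\cdot\g\bigr)$ is substituted in.
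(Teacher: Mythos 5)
Your proposal is correct and follows exactly the paper's route: the paper states the corollary as a direct consequence of the first inequality of the preceding lemma, which is precisely your argument of rearranging $g_k(a,b) \geq g_k(a+c,b+c) - c\cdot r(a+c,b+c)$, subtracting $k\cdot\g$ from both sides, and passing to the limit $k\to\infty$ using the definition $\f(a,b)=\lim_{k\to\infty}\bigl(g_k(a,b)-k\cdot\g\bigr)$. You merely spell out the limit passage that the paper leaves implicit, which is fine.
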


It should be intuitively clear that the gain $g_k(a,b)$ is non-decreasing in $a$ as having mined more
blocks cannot hurt the miner. Therefore we get the following useful fact which can also be easily proved by double backwards induction on $b$ and $a$.
\begin{proposition}\label{lem:struct-monotonicity}
The potential $\f(a,b)$ is non-decreasing in $a$.
\end{proposition}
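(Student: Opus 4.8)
The plan is to prove the statement at the level of the finite-horizon gains $g_k$ and then pass to the limit. Since $\f(a,b)=\lim_{k\to\infty}\bigl(g_k(a,b)-k\g\bigr)$ and the term $k\g$ does not depend on the state, it suffices to show that for every $k$ the map $a\mapsto g_k(a,b)$ is non-decreasing; subtracting the common $k\g$ then transfers the property to $\f$. The intuition is exactly the one suggested: starting from $(a+1,b)$ Miner 1 is one block ahead of the position $(a,b)$ and the behaviour of \frontier\ depends only on the relative lengths $b-a$, so an extra mined block can never reduce his expected payoff. Making this rigorous amounts to a coupling of the two executions, which unfortunately diverges as soon as the leading copy reaches a winning state, so I will instead argue directly from the recurrence for $g_k$.

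I would prove $g_k(a+1,b)\ge g_k(a,b)$ for $0\le a\le b$ by induction on $k$, with a nested induction that processes $b$ in increasing order and, for fixed $b$, the first coordinate in decreasing order from $a=b+1$ down to $a=0$. This is precisely the evaluation order dictated by the recurrence, since $g_k(a,b)$ refers to $g_k(a+1,b)$ (same $b$, larger first coordinate), to $g_{k-1}(a,b+1)$ (smaller horizon), and to $g_k(0,s)$ with $s<b$ (smaller $b$). The base case $k=0$ is trivial because $g_0\equiv 0$.

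For the inductive step I separate the winning boundary from the interior. On the boundary $a=b$ I compare the winning state $(b+1,b)$, whose value is $g_{k-1}(0,0)+b+1$, with the mining state $(b,b)$. In the interior $a\le b-1$ both $(a,b)$ and $(a+1,b)$ are mining states and share the same capitulation value $A_b=\max_{s<b}g_k(0,s)$; since $g_k(a+1,b)\ge A_b$ automatically, the convex-combination form of the recurrence reduces the desired inequality to the single cross-term estimate
\begin{equation*}
g_k(a+1,b)\ \ge\ g_{k-1}(a,b+1).
\end{equation*}
In fact this cross-term is \emph{equivalent} to monotonicity in $a$, because on the mining branch $g_k(a,b)$ is a convex combination of $g_k(a+1,b)$ and $g_{k-1}(a,b+1)$. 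I would close it through the chain
\begin{equation*}
g_k(a+1,b)\ \ge\ g_{k-1}(a+1,b)\ \ge\ g_{k-1}(a+1,b+1)\ \ge\ g_{k-1}(a,b+1),
\end{equation*}
where the first step is monotonicity of the gain in the horizon $k$ (elementary, since every extra level contributes a non-negative payment and at most one node is paid per level, giving $0\le g_k-g_{k-1}\le 1$), the second is a companion monotonicity asserting that $g_{k-1}$ is non-increasing in $b$ (being further behind cannot help), and the third is the induction hypothesis for monotonicity in $a$ at horizon $k-1$. The same three ingredients dispatch the boundary case, together with the bound $g_k(0,0)\le g_{k-1}(0,0)+1\le g_{k-1}(0,0)+(b+1)$.

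The main obstacle is that monotonicity in $a$ does not close on its own: the cross-term forces me to carry the companion claim that $g_k$ is non-increasing in $b$ throughout the same simultaneous induction. The delicate point in that companion claim is the capitulation term, since a larger $b$ offers strictly more landing options in $\max_{s<b}g_k(0,s)$ and naively pushes the value up. This is resolved by first establishing, as the $a=0$ case of the non-increasing-in-$b$ claim (which the increasing-$b$ order makes available), that $g_k(0,s)$ is non-increasing in $s$, so that the optimal landing state is always $(0,0)$ and the capitulation value collapses to the constant $g_k(0,0)$; with this simplification the companion claim and the main claim sustain each other. Finally, having $g_k(a+1,b)\ge g_k(a,b)$ for all $k$, I let $k\to\infty$ and subtract $k\g$ to obtain $\f(a+1,b)\ge\f(a,b)$, which is the asserted monotonicity.
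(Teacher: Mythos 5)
Your induction scaffolding (outer induction on $k$, then $b$ increasing, then $a$ decreasing, which is exactly the evaluation order of the recurrence for $g_k$) is the natural way to make the paper's one-line justification rigorous, and your reduction of the inductive step to the cross-term $g_k(a+1,b)\ge g_{k-1}(a,b+1)$ is correct. The fatal problem is the companion claim you route that cross-term through: the assertion that $g_k(a,b)$ is non-increasing in $b$, equivalently your ``collapse'' claim that the optimal landing state is always $(0,0)$, is false. Since capitulation from $(0,1)$ lands at $(0,0)$, one always has $g_k(0,1)\ge g_k(0,0)$, so your claim forces $g_k(0,1)=g_k(0,0)$, i.e.\ that capitulating at $(0,1)$ is optimal for Miner 1. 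By Lemma~\ref{thm:BR-and-potential} this is precisely the statement that \frontier\ is a best response, since $\f(0,1)-\f(0,0)=(\g-p)/(1-p)$. But the paper's upper-bound theorem exhibits, for $p\ge 0.455$, a strategy strictly better than \frontier\ (note that it uses landing state $s=1$, not $s=0$), so there $\g>p$, hence $\f(0,1)>\f(0,0)$ and $g_k(0,1)>g_k(0,0)$ for all large $k$. Proposition~\ref{lem:struct-monotonicity}, however, is stated for every $p$, so a proof through your companion claim cannot exist. Worse, in the small-$p$ regime where the collapse does hold, ``first establishing'' it is not a step you can wave through: it is exactly (a finite-horizon version of) Theorem~\ref{thm:lower-bound}, whose proof in the paper relies on this very proposition via Lemmas~\ref{lem:(0,2)-mining} and~\ref{lem:bounds-play}; so your argument is circular precisely where it is not false.

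A secondary issue: your ``elementary'' bound $0\le g_k-g_{k-1}\le 1$ is not uniform over states. The lower bound is fine, but the upper bound fails already at the base case, e.g.\ $g_1(b+1,b)-g_0(b+1,b)=b+1$ and $g_1(b,b)-g_0(b,b)\ge p(b+1)$, which is unbounded in $b$; the reason is that a race win pays $b+1$ blocks while consuming only one unit of the horizon $k$. What you actually need are specific instances such as $g_k(0,0)\le g_{k-1}(0,0)+1$, and these in turn require cross-state Lipschitz bounds like $g_k(0,1)\le g_k(0,0)+1$, which must themselves be carried through the induction rather than asserted. A repair of the main gap has to abandon monotonicity in $b$ altogether: the invariants to carry are the cross-term $g_k(a+1,b)\ge g_{k-1}(a,b+1)$ itself together with Lipschitz-type upper bounds on capitulation values (e.g.\ $g_k(0,s)\le g_{k-1}(0,0)+s+1$, which is what the comparison against the winning value $g_{k-1}(0,0)+b+1$ really needs). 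Unlike your companion claim, those statements hold for every $p$ --- being one block behind with one extra level of horizon is dominated by being one block ahead --- but as written your proposal derives the proposition from a lemma that is provably false.
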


The following three lemmas use the above results to provide explicit bounds on the potential of states $(1,2)$, $(0,2)$, and $(0,1)$ under certain assumptions on the best response of Miner 1 and his computational power.
\begin{lemma}
  For every $p$:
  \begin{align}\label{eq:f12-upper-bound}
  \f(1,2) &\leq \frac{2p^2-p}{(1-p)^2} + \g \frac{1}{1-p}.
  \end{align}
\end{lemma}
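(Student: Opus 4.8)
The plan is to obtain the upper bound by anchoring $\f(1,2)$ to the diagonally shifted state $(0,1)$, whose potential is exactly computable, and charging the single diagonal step against the winning probability.

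First I would apply Corollary~\ref{cor:bound-restriction} with $(a,b)=(0,1)$ and $c=1$. After rearranging the inequality stated there, this reads
\[
\f(1,2)\ \le\ \f(0,1)+r(1,2).
\]
Both $(0,1)$ and $(1,2)$ are legal states, so the corollary applies, and the single unit of diagonal shift ($c=1$) is exactly what carries $(0,1)$ to $(1,2)$.

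Next I would control the two pieces on the right-hand side separately. For the winning probability, Lemma~\ref{lem:win-probability} gives
\[
r(1,2)\ \le\ r_\infty(1,2)=\Bigl(\tfrac{p}{1-p}\Bigr)^{1+2-1}=\tfrac{p^2}{(1-p)^2}.
\]
For the potential $\f(0,1)$ I would read off its exact value from the recurrence: state $(0,0)$ forces the mining option (there is no capitulation target strictly below $b=0$), and $(1,0)$ is a winning state, so $\f(0,0)=p\,\f(1,0)+(1-p)\,\f(0,1)-(1-p)\g$ together with $\f(1,0)=\f(0,0)+1-\g$ and $\f(0,0)=0$ yields $\f(0,1)=\tfrac{\g-p}{1-p}$, exactly as in the proof of Lemma~\ref{thm:BR-and-potential}.

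Finally I would substitute and simplify. Combining the three facts gives $\f(1,2)\le\tfrac{\g-p}{1-p}+\tfrac{p^2}{(1-p)^2}$, and putting both terms over the common denominator $(1-p)^2$ rewrites the right-hand side as $\tfrac{2p^2-p}{(1-p)^2}+\tfrac{\g}{1-p}$, which is the claimed bound. There is no genuine obstacle here beyond choosing the correct anchor: the only idea is to descend the diagonal by exactly one step to a state whose potential is pinned down by the boundary of the recurrence, after which the estimate reduces to the winning-probability bound of Lemma~\ref{lem:win-probability} and routine algebra. I expect the companion bounds on $\f(0,2)$ and $\f(0,1)$ in the next two lemmas to follow the same anchoring-plus-$r_\infty$ template.
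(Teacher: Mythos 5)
Your proof is correct and yields exactly the claimed bound, but it takes a different decomposition than the paper's. The paper sandwiches the intermediate state $(1,1)$: from below by its mining value $\f(1,1)\geq p\,\f(2,1)+(1-p)\,\f(1,2)-\g(1-p)$ (using the winning-state value $\f(2,1)=2-\g$), and from above by $\f(1,1)\leq \f(0,0)+r(1,1)\leq p/(1-p)$ via Corollary~\ref{cor:bound-restriction} and Lemma~\ref{lem:win-probability}; eliminating $\f(1,1)$ and solving for $\f(1,2)$ gives the result. You instead apply the same corollary one diagonal lower, directly at the target state, $\f(1,2)\leq \f(0,1)+r(1,2)\leq \f(0,1)+p^2/(1-p)^2$, and you pin down $\f(0,1)=(\g-p)/(1-p)$ exactly from the Bellman equality at $(0,0)$ --- which is indeed forced, since $(0,0)$ admits no capitulation target and $(1,0)$ is a winning state; this is the same computation as in the proof of Lemma~\ref{thm:BR-and-potential}. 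The algebra checks out: $\frac{\g-p}{1-p}+\frac{p^2}{(1-p)^2}=\frac{2p^2-p}{(1-p)^2}+\frac{\g}{1-p}$, so the two routes produce the identical bound. Your version is arguably more direct (no intermediate potential to eliminate, a single one-sided estimate) and makes the shape of the bound transparent: an exactly known boundary potential plus a winning-probability term. The paper's version, in exchange, never needs the exact value of $\f(0,1)$; it relies only on the one-sided mining inequality at $(1,1)$, which holds whether or not $(1,1)$ is actually a mining state. Both arguments are valid for every $p$, as the lemma requires.
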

\begin{proof}
  Consider the potential of state $(1,1)$. We know that
  \begin{equation*}
    \f(1,1) \geq p\f(2,1)+(1-p)\f(1,2)-\g(1-p) = p(2-\g)+(1-p)\f(1,2)-\g(1-p).
  \end{equation*}
  On the other hand, from Corollary
  \ref{cor:bound-restriction} and Lemma \ref{lem:win-probability} we can
  bound it from above by $\f(1,1)\leq \f(0,0)+r(1,1)\leq
  \frac{p}{1-p}$. By putting the two bounds together and eliminating
  $\f(1,1)$ we get the lemma.
\end{proof}

\begin{lemma} \label{lem:(0,2)-mining}
  For $p < (3-\sqrt{5})/2\approx 0.382$, if state $(0,2)$ is a mining
  state, i.e. $(0,2)\in M$, then
  \begin{align}
    \f(0,2) & \leq \frac{2p^2-(1-p)^3}{(1-p)^2}.
  \end{align}
  It follows that for $p< \frac{1}{3} (1 - 8/(1 + 3
  \sqrt{57})^{1/3} + (1 + 3 \sqrt{57})^{1/3})\approx 0.361$ (root of
  the polynomial $2p^2-(1-p)^3$), state
  $(0,2)$ is not a mining state.
\end{lemma}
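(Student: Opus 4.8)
The plan is to expand the defining recurrence at the mining state $(0,2)$ and reduce everything to the already-established bound~\eqref{eq:f12-upper-bound} on $\f(1,2)$. Since $(0,2)\in M$, the mining branch attains the maximum in the recurrence for $\f$, so
\[
\f(0,2)=p\,\f(1,2)+(1-p)\,\f(0,3)-(1-p)\,\g.
\]
The immediate difficulty is the term $\f(0,3)$: one cannot simply discard it, because the capitulation option forces $\f(0,b)$ to be non-decreasing in $b$ (each state may fall back to the best earlier $(0,s)$), so in fact $\f(0,3)\ge\f(0,2)$ rather than being negligible. I therefore need a genuine \emph{upper} bound on $\f(0,3)$, expressed back in terms of $\f(0,2)$.

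The crux of the argument is to control $\f(0,3)$ by a ``diagonal slide''. First, monotonicity in the first coordinate (Proposition~\ref{lem:struct-monotonicity}) gives $\f(0,3)\le\f(1,3)$. Next, Corollary~\ref{cor:bound-restriction} applied with $(a,b)=(0,2)$ and $c=1$ yields $\f(1,3)\le\f(0,2)+r(1,3)$, and Lemma~\ref{lem:win-probability} bounds the winning probability by $r(1,3)\le\bigl(\tfrac{p}{1-p}\bigr)^{3}$ (here $1+b-a=3$). Chaining these gives $\f(0,3)\le\f(0,2)+\bigl(\tfrac{p}{1-p}\bigr)^{3}$, which is the estimate that makes the recurrence usable.

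Substituting this back, the $\f(0,2)$ terms collect with positive coefficient $1-(1-p)=p$, leaving an inequality of the shape $\f(0,2)\le\f(1,2)+\tfrac{p^{2}}{(1-p)^{2}}-\tfrac{1-p}{p}\,\g$. Inserting~\eqref{eq:f12-upper-bound} produces a bound that is affine in $\g$ with coefficient $\tfrac{1}{1-p}-\tfrac{1-p}{p}=\tfrac{3p-1-p^{2}}{p(1-p)}$. The hypothesis $p<(3-\sqrt5)/2$ is \emph{exactly} the condition that makes this coefficient negative, since its sign changes at the root of $p^{2}-3p+1$; because \frontier\ always guarantees $\g\ge p$, a negative coefficient lets me replace $\g$ by its lower bound $p$ to obtain an upper bound. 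A short simplification then collapses the right-hand side to $\tfrac{2p^{2}-(1-p)^{3}}{(1-p)^{2}}$, as claimed.

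Finally, the concluding statement follows by contradiction: for $p$ below the root of $2p^{2}-(1-p)^{3}$ the displayed bound is strictly negative (the cubic $p^{3}-p^{2}+3p-1$ equals $-1$ at $p=0$ and rises to $0$ at $0.361$), whereas every potential is non-negative; hence $(0,2)$ cannot be a mining state. I expect the main obstacle to be the middle step—finding the correct way to bound $\f(0,3)$—since the naive monotonicity runs in the wrong direction and the useful estimate only emerges from combining the $a$-monotonicity of Proposition~\ref{lem:struct-monotonicity}, the diagonal inequality of Corollary~\ref{cor:bound-restriction}, and the gambler's-ruin bound of Lemma~\ref{lem:win-probability}.
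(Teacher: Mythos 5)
Your proposal is correct and follows essentially the same route as the paper's own proof: you expand the mining recurrence at $(0,2)$, bound $\f(0,3)\le\f(1,3)\le\f(0,2)+r(1,3)\le\f(0,2)+\bigl(\tfrac{p}{1-p}\bigr)^{3}$ via Proposition~\ref{lem:struct-monotonicity}, Corollary~\ref{cor:bound-restriction}, and Lemma~\ref{lem:win-probability}, solve for $\f(0,2)$, insert~\eqref{eq:f12-upper-bound}, use the sign of the $\g$-coefficient (negative exactly when $p<(3-\sqrt5)/2$) to replace $\g$ by $p$, and conclude the second part by non-negativity of the potential. All steps, including the identity $2p^{2}-(1-p)^{3}=p^{3}-p^{2}+3p-1$ used in the final contradiction, check out.
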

\begin{proof}
  To bound $\f(0,2)$ we use the bound for $\f(1,2)$ from the previous
  lemma and a bound for $\f(0,3)$. To bound $\f(0,3)$ we use the
  monotonicity property of $\f(a,b)$ with respect to $a$ and apply
  Corollary \ref{cor:bound-restriction} and Lemma
  \ref{lem:win-probability} to $\f(1,3)$:
  \begin{equation*}
    \f(0,3)\leq \f(1,3) \leq \f(0,2)+r(1,3) = \f(0,2)+\frac{p^3}{(1-p)^3}.
  \end{equation*}
  Assuming now that $(0,2)$ is a mining state, we get:
  \begin{align*}
    \f(0,2) &=  p\f(1,2)+(1-p)\f(0,3) - \g(1-p)\\
    & \leq p\f(1,2)+(1-p)\f(0,2)+(1-p)\frac{p^3}{(1-p)^3} - \g(1-p)
  \end{align*}
and by solving for $\f(0,2)$:
\begin{align*}
  \f(0,2) & \leq \f(1,2) + \frac{p^2}{(1-p)^2} -\g \frac{1-p}{p} \\
  & \leq \frac{2p^2-p}{(1-p)^2} + \g \frac{1}{1-p}+  \frac{p^2}{(1-p)^2} -\g
    \frac{1-p}{p} \\
   & =  \frac{3p^2-p}{(1-p)^2} -\g \frac{(1-p)^2-p}{p(1-p)}.
\end{align*}
The coefficient of $\g$ is negative for $p\leq (3-\sqrt{5})/2$, so we
can replace it by $p$ to get the first part of the lemma (recall that
$\g$ is always greater or equal to $p$).

For the second part of the lemma, it suffices to observe that for
$p<\frac{1}{3} (1 - 8/(1 + 3 \sqrt{57})^{1/3} + (1 + 3
\sqrt{57})^{1/3})$ the expression $\frac{2p^2-(1-p)^3}{(1-p)^2}$ is
negative. Since the potential cannot be negative, it follows by
contradiction that $(0,2)\not\in M$.
\end{proof}

\begin{lemma}\label{lem:bounds-play}
  For $p < (3-\sqrt{5})/2$, if $(0,1)$ is a mining state, then $(0,2)$
  is also a mining state and
  \begin{align}
    \f(0,1) & \leq (1-p)\f(0,2)-p\frac{1-3p+p^2}{1-p}.\label{eq:bound-play-02}
  \end{align}
\end{lemma}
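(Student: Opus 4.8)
The plan is to work directly with the recurrence for $\f$ and to handle the two claims in a unified way: first establish the displayed bound \eqref{eq:bound-play-02}, and then deduce that $(0,2)$ must be a mining state as a consequence of that bound together with the sign condition $p<(3-\sqrt{5})/2$. Throughout I assume the hypothesis that $(0,1)\in M$. The first step is to record what this hypothesis means. Since $(0,1)$ is neither a winning state (winning states have the form $(a,a-1)$) nor of the form $a=b+1$, it falls in the ``otherwise'' branch of the $\f$-recurrence, and because it is a mining state the maximum there is attained by the mining term, so
$$\f(0,1) = p\,\f(1,1) + (1-p)\,\f(0,2) - (1-p)\,\g.$$
I also note the standing fact $\f(0,1)\geq \f(0,0)=0$, which holds because the capitulation value $\f(0,0)$ is one of the options available at $(0,1)$.

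To convert this identity into \eqref{eq:bound-play-02}, I would bound the two ``unwanted'' terms on the right. For $\f(1,1)$, Corollary~\ref{cor:bound-restriction} together with Lemma~\ref{lem:win-probability} give $\f(1,1)\leq \f(0,0)+r(1,1)\leq p/(1-p)$, exactly as in the proof of the preceding lemma. For the $\g$ term, the key observation is that its coefficient $-(1-p)$ is negative, so replacing $\g$ by its lower bound $p$ (recall $\g\geq p$ by Proposition~\ref{prop:frontierNE}) only increases the right-hand side. Substituting both bounds yields
$$\f(0,1)\leq \frac{p^2}{1-p} + (1-p)\,\f(0,2) - p(1-p),$$
and a routine simplification of the constant $\frac{p^2}{1-p}-p(1-p)$ into $-p\,\frac{1-3p+p^2}{1-p}$ gives precisely \eqref{eq:bound-play-02}.

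The last step is to derive $(0,2)\in M$ by contradiction, and this is where the assumption $p<(3-\sqrt{5})/2$ is used. Suppose instead that $(0,2)$ is a capitulation state. As $(0,2)$ is not a winning state, the recurrence forces $\f(0,2)=\max_{s=0,1}\f(0,s)=\f(0,1)$, using $\f(0,1)\geq\f(0,0)$. Feeding $\f(0,2)=\f(0,1)$ into \eqref{eq:bound-play-02} and solving for $\f(0,1)$ gives $\f(0,1)\leq -\frac{1-3p+p^2}{1-p}$. For $p<(3-\sqrt{5})/2$ the numerator $1-3p+p^2$ is positive, so this would force $\f(0,1)<0$, contradicting $\f(0,1)\geq 0$. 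Hence $(0,2)$ is a mining state.

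I do not expect a serious obstacle here, since once the setup is chosen the argument is short; the only genuine care is in the bookkeeping of which branch of the $\f$-recurrence applies at $(0,1)$ and $(0,2)$ — verifying that neither is a winning state and evaluating $\max_{s}\f(0,s)$ correctly via the non-negativity and monotonicity facts already established. The mildly clever point, which streamlines the whole proof, is to prove the inequality \eqref{eq:bound-play-02} first (it needs only the hypothesis $(0,1)\in M$) and then reuse it, rather than arguing the mining-state claim and the bound separately.
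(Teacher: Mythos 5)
Your proof is correct, and it shares its overall skeleton with the paper's: first derive \eqref{eq:bound-play-02} using only the hypothesis $(0,1)\in M$, then get $(0,2)\in M$ by contradiction, since $\f(0,2)=\f(0,1)$ substituted into \eqref{eq:bound-play-02} forces $\f(0,1)\leq -\frac{1-3p+p^2}{1-p}<0$ when $p<(3-\sqrt{5})/2$, contradicting non-negativity. (So the structural point you flag as your ``mildly clever'' streamlining --- prove the inequality first, then reuse it --- is in fact exactly the paper's own organization.) Where you genuinely diverge is in how \eqref{eq:bound-play-02} is derived. The paper first observes that $(0,1)\in M$ forces $(1,1)\in M$, expands $\f(1,1)=p\f(2,1)+(1-p)\f(1,2)-\g(1-p)$, and then substitutes $\f(2,1)=2-\g$ together with the bound \eqref{eq:f12-upper-bound} on $\f(1,2)$ from the preceding lemma. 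You instead bound $\f(1,1)\leq \f(0,0)+r(1,1)\leq p/(1-p)$ directly from Corollary~\ref{cor:bound-restriction} and Lemma~\ref{lem:win-probability} and plug this into the mining identity at $(0,1)$; since the coefficient of $\f(1,1)$ there is positive and that of $\g$ is negative, both of your substitutions go the right way, and the algebra does collapse to \eqref{eq:bound-play-02}. The two routes are algebraically equivalent --- the bound \eqref{eq:f12-upper-bound} was itself extracted from the same estimate $\f(1,1)\leq p/(1-p)$, so unwinding the paper's substitutions reproduces your one-line computation --- but yours is logically leaner: it avoids the side claim that $(0,1)\in M$ implies $(1,1)\in M$ (which the paper asserts without proof), and it makes \eqref{eq:f12-upper-bound} entirely unnecessary for this lemma (the paper needs that bound anyway for Lemma~\ref{lem:(0,2)-mining}, which is presumably why it reuses it here). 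One cosmetic inaccuracy: you cite Proposition~\ref{prop:frontierNE} for $\g\geq p$; that proposition does not literally state this, but the fact is standard --- the optimal strategy is at least as good as playing \frontier, which yields gain $p$ per level --- and the paper invokes it in the same unceremonious way.
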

\begin{proof}
First we bound $\f(0,1)$ if we assume that $(0,1)\in M$. Note that this easily implies that also state $(1,1)$ is a mining state. We get
\begin{align*}
  \f(0,1) & = p\f(1,1)+(1-p)\f(0,2) - \g(1-p)\\
   & = p(p\f(2,1)+(1-p)\f(1,2) - \g(1-p))+(1-p)\f(0,2)- \g(1-p) \\
   & \leq (1-p)\f(0,2)-p\frac{1-3p+p^2}{1-p}
\end{align*}
where we used the bound for $\f(1,2)$ from above and $\f(2,1)=2-\g$.
Towards a contradiction assume now that $(0,2)\not\in M$. That is, the
miner capitulates at state $(0,2)$ and moves to either state $(0,1)$
or state $(0,0)$. Since the miner can move to state $(0,0)$ in two
steps by first moving to state $(0,1)$, we can assume without loss of
generality that he moves to state $(0,1)$ and therefore
$\f(0,2)=\f(0,1)$. By substituting this in the above expression we get
\begin{align*}
  \f(0,1)\leq - \frac{1-3p+p^2}{1-p} <0
\end{align*}
for $p< (3-\sqrt{5})/2$, a contradiction.
\end{proof}

We are now ready to prove the main result of this section. We use the bounds provided by the previous three technical lemmas (together with Lemma \ref{thm:BR-and-potential}) to prove that Miner 1 capitulates from state $(0,1)$, and that \frontier\ therefore is his best response.
\begin{proof}[Proof of Theorem \ref{thm:lower-bound}]
  For
  $p\leq \frac{1}{3} (1 - 8/(1 + 3 \sqrt{57})^{1/3} + (1 + 3
  \sqrt{57})^{1/3})$, Lemma \ref{lem:(0,2)-mining} establishes that
  $(0,2)$ is not a mining state. But then from Lemma
  \ref{lem:bounds-play}, neither state $(0,1)$ is a mining state, or
  equivalently, \frontier\ is the best-response strategy of Miner 1.
\end{proof}

\subsection{Upper bound}

The main theorem of this section, Theorem \ref{thm:lower-bound}, shows
that if each miner $i$ has relative computational power $p_i<0.361$,
\frontier\ is a Nash equilibrium. On the other hand, it is intuitively
clear that if a miner has computational power close to $1/2$, he will
have some advantage if he does not play \frontier, against miners who
play \frontier. Recall that $h_0$ denotes the maximum relative computational power
of each miner for which \frontier\ is a Nash equilibrium. Experimental
results stated in Table \ref{tab:threshold-d}, based on computing the potential $\f$, show
that $h_0=0.418$.
\begin{table}[ht]
\centering
\caption{The threshold for different values of $d$.}
{
\begin{tabular}{l|cc}
 &&  threshold \\\hline \hline
$d=2$ && 0.5\\
$d=3$ && 0.454\\
$d=5$ && 0.432\\
$d=10$&& 0.422\\
$d=15$&& 0.42\\
$d=\infty$&& 0.418\\
  \hline
\end{tabular}}
\label{tab:threshold-d}\end{table}

Our work in this section is to estimate the threshold $h_0$. Providing
rigorous lower bounds for $h_0$---as we do in Theorem
\ref{thm:lower-bound}---does not appear to be easy since it involves a
non-trivial Markov decision process. However, it is not too hard to
obtain good upper bounds of $h_0$ as it suffices to come up with a
mining strategy that has expected gain $\g$ greater than $p$. Here we
provide a simple such upper bound which can be directly extended to
get an upper bound of $h_0$ close to $0.418$.

\begin{theorem}
When Miner 2 plays \frontier, the best response strategy for Miner 1 is not \frontier\ when $p\geq 0.455$.
\end{theorem}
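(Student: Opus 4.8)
The goal is to exhibit a concrete strategy for Miner 1 whose expected gain per level $\g$ strictly exceeds $p$, since by Proposition~\ref{prop:frontierNE} this certifies that \frontier\ is not a best response. The plan is to analyze the simplest nontrivial deviation: Miner 1 follows a strategy in which he keeps mining his own branch whenever he is at state $(0,0)$ or has just fallen one block behind, capitulating only once the gap grows too large, while honestly collecting the payment whenever he reaches a winning state $(a,a-1)$. Concretely, I would fix a small mining set $M$ (for instance $M=\{(0,0),(0,1),(1,1)\}$ together with the states reachable from these) and a landing state $s$, and then compute the resulting $\g$ exactly as the stationary gain per level of the induced finite Markov chain.

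First I would set up the recurrence for the potential $\f$ under this fixed strategy $(M,s)$, rather than the optimized $\max$ version; since the strategy is fixed, the $\max$ operators disappear and the recurrence becomes a finite linear system in the unknowns $\{\f(a,b) : (a,b)\in M\}$ together with the single unknown $\g$, with the normalization $\f(0,0)=0$. The winning states contribute the boundary relation $\f(a,a-1)=\f(0,0)+a-\g$ and the capitulation states contribute $\f(a,b)=\f(0,s)$. Solving this small linear system yields $\g$ as an explicit rational function of $p$. I would then show that the inequality $\g>p$, after clearing denominators (all positive for $p\in(0,1/2)$), reduces to a polynomial inequality in $p$ that holds for $p\ge 0.455$.

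The main obstacle will be choosing the mining set $M$ large enough that the deviation actually beats \frontier\ at a threshold as low as $0.455$, while keeping the induced linear system small enough to solve by hand. Too small an $M$ gives a provable but weaker threshold; the remark following Table~\ref{tab:threshold-d} suggests that richer mining sets push the bound down toward the experimental value $0.418$, so there is a genuine tradeoff between the cleanliness of the argument and the strength of the constant. I would therefore start with the smallest $M$ that plausibly works, verify the polynomial inequality, and only enlarge $M$ if the resulting threshold exceeds $0.455$.

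Finally, I would double-check the boundary bookkeeping: when Miner 1 reaches a winning state the trunk advances by $a$ and he is paid for $a$ blocks, whereas when he capitulates from $(0,b)$ the trunk advances by $b-s$ with no payment to him, and the frontier advancement must be tracked consistently so that the per-level normalization of $\g$ is correct. A convenient sanity check is that if one sets $M=\{(0,0)\}$ (pure \frontier), the same computation must return $\g=p$ exactly; any deviation strategy that returns $\g>p$ at $p=0.455$ then completes the proof, since by Lemma~\ref{thm:BR-and-potential} strict inequality $\g>p$ is equivalent to $\f(0,1)>\f(0,0)$, i.e.\ Miner 1 strictly prefers to mine rather than capitulate from $(0,1)$.
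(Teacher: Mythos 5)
Your plan coincides with the paper's own proof: the paper fixes the deviation strategy truncated at $d=3$, namely $M=\{(0,0),(0,1),(1,1),(1,2),(2,2)\}$ with landing state $s=1$, writes down exactly the max-free linear potential system you describe (normalization $\f(0,0)=0$, winning boundary $\f(a,b)=a-\g$ at $a=b+1$, capitulation boundary $\f(a,b)=\f(0,1)$), solves it to obtain $\g=\frac{p^2(2+2p-5p^2+2p^3)}{1-p^2+2p^3-p^4}$, and verifies both the consistency condition $\f(a,b)\geq\f(0,1)$ on $M$ and the inequality $\g>p$ for $p\geq 0.455$, which is precisely your certificate via Proposition~\ref{prop:frontierNE}. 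The one concrete correction: your tentative starting set $M=\{(0,0),(0,1),(1,1)\}$ (capitulating once $b\geq 2$) is the $d=2$ truncation, whose threshold is $0.5$ by Table~\ref{tab:threshold-d}, so that choice provably cannot beat \frontier\ at $p=0.455$; the enlargement step you anticipated is therefore not optional, and a single iteration of it---adding $(1,2)$ and $(2,2)$, i.e., passing to the $d=3$ truncation---lands exactly on the paper's strategy and yields the stated bound.
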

\begin{proof}
It suffices to consider a fixed mining strategy $(M, s)$ that has expected gain per step $\g$ greater than $p$. We consider a mining strategy truncated at $d=3$,
i.e., the miner capitulates at every state $(a,b)$ with $b\geq 3$.

We select $M=\{(0,0),(0,1),(1,1),(1,2),(2,2)\}$ and $s=1$ and define the potential for this strategy as follows:
\begin{align*}
\f(0,0)&=0, \\
\f(0,1)&=\frac{\g-p}{1-p},\\
\f(a,b)&=\f(0,1), & \text{for every $(a,b)\in C$} \\
\f(a,b) &= a-\g, & \text{when $a=b+1$} \\
\f(2,2)&=p\f(3,2)+(1-p)\f(0,1)-\g(1-p),\\
\f(1,2)&=p\f(2,2)+(1-p)\f(0,1)-\g(1-p),\\
\f(1,1)&=p\f(2,1)+(1-p)\f(1,2)-\g(1-p).
\end{align*}
We need to select $\g$ and verify that this is the correct
potential. In particular, we need to verify that for all mining states $(a,b)\in M$ we have $\f(a,b)\geq \f(0,1)$, which holds when
$$\g=\frac{p^2(2+2p-5p^2+2p^3)}{1-p^2+2p^3-p^4}.$$
We can also verify that for $p\geq 0.455$, the expected gain $\g$
is strictly greater than $p$, which establishes that the strategy $(M,s)$ is a better response than \frontier.
\end{proof}
The strategy employed in the proof of the theorem is optimal for the truncated game at $d=3$ when $p\approx 0.455$. As we mentioned above, one can compute the optimal strategy for the finite games truncated at $d=4, 5$ and so on, to get better and better upper bounds. These bounds converge relatively quickly to $h_0=0.418$.


\section{The Strategic-Release Game}
Similarly to the immediate-release case, we wish to identify conditions such that \frontier\ is a Nash equilibrium. To do so, we again assume
that all but one miner, say Miner 1, use the \frontier\ strategy, and
then examine the best response of Miner 1. Since all honest miners select the same block
to mine and release it immediately once it is mined, they essentially act
as one miner, and it is sufficient
to consider the case of two miners.

As in the immediate-release case, the tree created by the two miners has width 2. The trunk is permanently fixed and can be safely
ignored; hence, the situation is captured by the two branches of the
execution tree in addition to a bit of information regarding each
block in Miner 1's branch specifying whether or not Miner 1 has
released this block. In particular, this situation can be captured by a triple of numbers $a$, $a_\mathsf{r}$
and $b$, where $a$ and $b$ denote the number of blocks mined by Miner 1 and the honest miner, respectively (the length of the two branches), while $a_\mathsf{r}\leq a$ denotes the number of the released blocks on Miner's 1 branch (note that they are consecutive blocks starting from $(0,0)$). Since the honest miner immediately announces the mined blocks, all $b$ blocks on his branch are always released. Also, contrary to the immediate-release case, it can be $a>b+1$.

Under the assumption that  Miner 2 follows the \frontier\ strategy, we
know that if Miner 1 has released $a_\mathsf{r}\leq b$ blocks, then Miner 2
will not abandon his path, but if $a_\mathsf{r}>b$ then the honest miner will
immediately capitulate. So, without loss of generality we can assume
that if $a\leq b$ then $a_\mathsf{r}=a$ while if $a>b$, then $a_\mathsf{r}=b$, otherwise
the honest miner would have immediately abandoned his path, and the
game would be at state $(a-a_\mathsf{r},0)$. In other words, we can always
consider $a_\mathsf{r}=\min\{a,b\}$; hence we can capture the state of the game
by the tuple $(a,b)$ as in the immediate-release case.

Let $\gx_k(a,b)$ denote the expected gain of Miner 1 when the branch of the honest miner in the execution tree is extended by $k$ new
levels starting from an initial tree in which
the non-common parts of the paths have length $a$ and $b$, but Miner 1 has only released the first $\min\{a,b\}$ blocks on his path. It should
be intuitively clear that, in the long run, the expected gain per level
should be almost independent of the initial state. With this in mind,
we can write the expected gain as
$$\gx_k(a,b)=k\cdot \gxs + \ff(a,b),$$
for some constant $\gxs$ which is the expected gain per level in the
long run. The potential function $\ff(a,b)$ denotes the advantage
of Miner 1 for currently being in state $(a,b)$ (and having released $\min\{a,b\}$ blocks).

We can define $\gx_k(a,b)$ as follows:

\begin{itemize}
\item If $a\leq b$, Miner 1 has two options: to capitulate or to
  mine. In the latter case, the next state will be $(a+1,b)$ with
  probability $p$, and $(a,b+1)$ with the remaining probability.
\item If $a>b$, Miner 1 has one additional option to the previous
  case: he can release an additional block and lead the game to state
  $(a-b-1,0)$. When this happens, Miner 2 who plays \frontier\
  capitulates. Note that we allow Miner 1 to repeatedly release
  blocks, which is equivalent to allowing him to release any number of
  blocks.
\end{itemize}

From the above consideration, we get the following optimal gain
for Miner 1:
\begin{align}\nonumber
\gx_k(a,b)=&\max\left\{\max_{s=0,\ldots,b-1} \gx_{k}(0,s), p\cdot \gx_k(a+1, b)
            +
            (1-p)\cdot \gx_{k-1}(a, b+1), \right.\\ \label{rec:gx}
	& \left.  \textcolor{white}{\max_{s=0,\ldots,b-1}}\gx_{k-1}(a-b-1,0)+b+1
          \right\},
\end{align}
where the last term inside the $\max$ applies only when $a\geq b+1$;
equivalently we can define $\gx_k(a,0)=-\infty$ for any $k$ when $a<0$. The base
case of the recurrence is $\gx_0(a,b)=0$.

As in the case of immediate-release, we define a potential $\ff$ from
$$\gx_k(a,b)=k \gxs+\ff(a,b),$$
when $k$ tends to infinity. We also note that the definition above does not take $d$ into account (considers $d=\infty$). This is without loss of generality for the proof of Theorem \ref{thm:strategic-lower-bound} as proving the result for $d=\infty$ is stronger (as Miner 1 has a superset of available winning paths than for any constant $d$).

The recurrence for the potential is
\begin{align}\nonumber
  \ff(a,b)&=\max\left\{\max_{s=0,\ldots,b-1} \ff(0,s),p\cdot \ff(a+1,b)
            +
            (1-p)\cdot \ff(a,b+1)-\gxs \cdot(1-p),\right.\\ \label{rec:a>b}
	& \left. \textcolor{white}{\max_{s=0,\ldots,d-1}}\ff(a-b-1,0)+b+1-\gxs
          \right\},
\end{align}
where again we define $\ff(a,0)=-\infty$ when $a<0$. We also fix the
value $\ff(0,0)=0$.

\subsection{\frontier\ is a NE iff \boldmath$p\leq \hat h_0$, where $
  \hat h_0\geq 0.308$}

In this section we will show the following theorem.

\begin{theorem} \label{thm:strategic-lower-bound} In the strategic-release model, \frontier\ is a Nash equilibrium when every miner $i$ has relative computational power
  $p_i\leq 0.308$ (root of the polynomial $p^3- 6 p^2+ 5 p - 1$).
\end{theorem}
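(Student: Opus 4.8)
The plan is to mirror the structure of the immediate-release analysis (from Proposition~\ref{prop:frontierNE} through Theorem~\ref{thm:lower-bound}), re-deriving each ingredient for the richer recurrence~\eqref{rec:a>b}, in which Miner~1 has the extra \emph{release} action. First I would prove the strategic analog of Lemma~\ref{thm:BR-and-potential}: evaluating the recurrence at the three states $(0,0)$, $(1,0)$ and $(0,1)$ gives $\ff(0,0)=p\,\ff(1,0)+(1-p)\,\ff(0,1)-\gxs(1-p)$, while releasing at $(1,0)$ yields the candidate value $\ff(1,0)=1-\gxs$ and capitulating at $(0,1)$ yields $\ff(0,1)=0$. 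Substituting shows that \frontier\ (immediate release at $(1,0)$, capitulation at $(0,1)$) is the best response if and only if $\gxs=p$, i.e. if and only if neither of the two relevant one-shot deviations---mining at $(0,1)$ instead of capitulating, or withholding at $(1,0)$ instead of releasing---is profitable. By Proposition~\ref{prop:frontierNE} this is exactly the Nash condition, so the whole theorem reduces to ruling out these deviations for $p\le 0.308$, which I would do by a contradiction argument over the reachable states.

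Next I would re-establish the structural tools of Section~3 in the present setting. Monotonicity of $\ff$ in $a$ (the analog of Proposition~\ref{lem:struct-monotonicity}) carries over unchanged, since having mined more blocks can only help. For the winning-probability and shift bounds I would reprove the analogs of Lemma~\ref{lem:win-probability} and Corollary~\ref{cor:bound-restriction}: starting from a state with $a\le b$, the pre-release dynamics are the same biased random walk on $\ell=1+b-a$, so the probability of ever reaching a one-ahead state is at most $\bigl(p/(1-p)\bigr)^{1+b-a}$, and the simulation argument gives $\ff(a+c,b+c)\le \ff(a,b)+c\cdot r(a+c,b+c)$. The point requiring care is that the simulation stays valid when Miner~1 may release: Miner~2's reaction depends only on whether the number of released blocks exceeds $b$, and the extra $c$ banked blocks are accounted for precisely by the $c\cdot r$ term, so shifting both coordinates by $c$ leaves the comparison intact.

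The genuinely new ingredient is the \emph{value of a lead}, i.e. bounding the potentials of the states $(a,b)$ with $a>b$ that have no counterpart in the immediate-release game. Greedy release gives the lower bound $\ff(a,0)\ge a(1-\gxs)$, and the core of the argument is to show that for small $p$ withholding to extend a lead is never profitable, so that release is optimal at the low-lead states actually reached; in particular $\ff(2,1)=2-\gxs$. Feeding this into the recurrence, I would bound the value of mining at $(0,1)$, namely $p\,\ff(1,1)+(1-p)\,\ff(0,2)-\gxs(1-p)$, using $\ff(1,1)=\max\bigl(0,\;p\,\ff(2,1)+(1-p)\,\ff(1,2)-\gxs(1-p)\bigr)$ together with the shift bounds $\ff(1,2)\le\ff(0,1)+r(1,2)$ and $\ff(0,2)\le\ff(1,2)$, and keeping $\gxs$ as a free parameter with $\gxs\ge p$ (replacing it by $p$ only where its coefficient is negative, exactly as in Lemma~\ref{lem:(0,2)-mining}).

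Finally, as in Lemmas~\ref{lem:(0,2)-mining} and~\ref{lem:bounds-play}, I would run the contradiction: assuming \frontier\ is \emph{not} a best response---equivalently the mining option at $(0,1)$ is strictly positive, so $\gxs>p$---I would first show that this forces $(0,2)\in M$ and then propagate the bounds up the recurrence until a potential that must be non-negative is forced to be negative. This forcing should fail precisely when $p^{3}-6p^{2}+5p-1<0$, i.e. when $p\le 0.308$, which is where the stated polynomial enters. I expect the main obstacle to be exactly the lead-state analysis of the previous paragraph: unlike the immediate-release game, the value of getting ahead here is not merely the size of the lead but includes the option to extend it and to profit from an opponent's wasted blocks, so the upper bounds on $\ff(a,0)$ and $\ff(2,1)$---and the proof that release, not further withholding, is optimal at these states for $p\le 0.308$---must be established carefully and kept tight enough that the contradiction is triggered at the right threshold rather than at the weaker value produced by crude gambler's-ruin estimates alone.
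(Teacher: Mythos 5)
Your reduction of the theorem to ruling out the two one-shot deviations (mining at $(0,1)$, withholding at $(1,0)$) is fine, and you correctly identify the crux: upper-bounding the potentials of the lead states $(a,b)$ with $a>b$. But this is exactly where your plan has a genuine gap, and it is not one your toolkit (monotonicity, the gambler's-ruin bound, the shift bound of Corollary~\ref{cor:bound-restriction}) can fill. The Bellman equation at every lead state refers to deeper lead states: an upper bound on $\ff(1,0)$ needs one on $\ff(2,0)$, which needs one on $\ff(3,0)$, and so on; monotonicity in $a$ gives bounds in the wrong direction, the shift bound only moves along diagonals and cannot push $b$ below $0$, and the winning-probability bound $\bigl(p/(1-p)\bigr)^{1+b-a}$ degenerates to $1$ once $a\geq b+1$. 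So the propagation you describe never closes, and your central claims --- ``$\ff(2,1)=2-\gxs$'' and ``withholding to extend a lead is never profitable'' --- remain unsupported: they are essentially the theorem itself restricted to the lead cone, not lemmas obtainable by the local contradiction scheme of Lemmas~\ref{lem:(0,2)-mining} and~\ref{lem:bounds-play}.

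The paper closes the cone with an idea your proposal is missing: an explicit supersolution, verified globally rather than argued by contradiction. It keeps the immediate-release potential $\f$ (which exists with $\g=p$ for $p\leq 0.361$, hence in particular for $p\leq 0.308$, by Theorem~\ref{thm:lower-bound}) on states with $a\leq b+1$, and extends it linearly on the lead states by $\sol(a,b)=a\lambda-b\mu-c$ with $\lambda=(1-p)^2/(1-2p)$, $\mu=p^2/(1-2p)$, $c=p(1-p)/(1-2p)$, chosen so that mining leaves $\sol$ invariant there. Lemma~\ref{lem:strategic-potential-solution} then verifies the full recurrence \eqref{rec:a>b} with $\gxs=p$ in three cases; the only delicate case is $(b+1,b)$, where release must dominate mining, and plugging $\f(b+1,b+1)\leq (b+1)p/(1-p)$ into the linear extension yields $b(1-2p)^2+1-5p+6p^2-p^3\geq 0$ --- this comparison, not the $(0,1)$/$(0,2)$ propagation you envisage, is where the polynomial $p^3-6p^2+5p-1$ actually arises. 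Finally, the bound $\gx_k(a,b)\leq kp+\sol(a,b)$ is proved by induction on $k$ and $b$ with \emph{backwards} induction on $a$, whose base case is any $a\geq a_m(k,b)$ large enough that $a\lambda-b\mu-c\geq k+b$ makes the bound trivial; this device is what tames the infinite regress along $(a,0),(a+1,0),\dots$ that defeats your approach. (Your plan to reprove strategic analogs of Lemma~\ref{lem:win-probability} and Corollary~\ref{cor:bound-restriction} is also unnecessary on the paper's route, since only their immediate-release versions, applied to $\f$, are ever used.) Without an explicit global bound of this kind, your contradiction scheme cannot be completed, so the proposal as written does not prove the theorem.
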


Theorem~\ref{thm:lower-bound} established that for any $p\leq 0.361$
there exists a potential $\f$ for the immediate-release model such
that $\f(0,0)=\f(0,1)$ and $\g=p$. The main idea of the proof is to
extend this potential to states $(a,b)$ such that $a>b+1$. This is
possible, for the following reasons:
\begin{itemize}
\item for states $(b+1,b)$, when $p$ is small and Miner 1 is one block
  ahead of Miner 2, there is a high risk if he does not release the
  block; the risk is that Miner 2 can mine one extra block and then
  Miner 1 may end up in a stale branch.
\item for states $(a,b)$ with $a>b+1$, when Miner 1 is at least two
  blocks ahead, it is safe not to release any blocks until Miner 2
  catches up to within distance one.
\end{itemize}
With this in mind, we define the following potential
\begin{equation*}
\sol(a,b)=\left\{\begin{array}{ll}
\f(a,b), &\text{ when } a\leq b,\\
a\lambda-b\mu-c, &\text{ otherwise}
\end{array}\right.
\end{equation*}
where $\lambda= \frac{(1-p)^2}{1-2p}$, $\mu=\frac{p^2}{1-2p}$ and
$c=\frac{p(1-p)}{1-2p}$. The parameters of the second part are chosen
so that $\sol(a,b)=p\sol(a+1,b)+(1-p)\sol(a,b+1)-p (1-p)$. It is
important to notice that the two pieces fit together nicely in the
sense that for $a=b+1$ we have that
$\sol(a,b)=\f(a,b)=a\lambda-b\mu-c$, so we could use the following
equivalent definition for $\sol$:
\begin{equation*}
\sol(a,b)=\left\{\begin{array}{ll}
\f(a,b), &\text{ when } a\leq b+1,\\
a\lambda-b\mu-c, &\text{ otherwise.}
\end{array}\right.
\end{equation*}
To ease the presentation, we will use the following notation, for the
potential of states $(a,b)$ with $a\geq b+1$: (a) $\sol_M$ when Miner 1 continues to mine, (b) $\sol_R$ when Miner 1 releases one more block and the other (honest) miner capitulates, and (c) $\sol_C$ when Miner 1 capitulates.
\begin{align*}
  \sol_M(a,b) &= p \sol(a+1,b)+(1-p)\sol(a,b+1)-p(1-p)\\
  \sol_R(a,b) &= \sol(a-b-1,0)+b+1-p \\
  \sol_C(a,b) &= 0.
\end{align*}
Note that when Miner 1 capitulates, he starts mining at one of the states $(0,s)$ where $s\leq b-1$, where by assumption the potential is $0$.
To prove the theorem, we first show that $\sol$ satisfies the
recurrence of the strategic-release potential \eqref{rec:a>b} when
$\gxs=p$. Equivalently, using the above notation
\begin{align}
  \sol(a,b)&=\max(\sol_M(a,b),\sol_R(a,b),\sol_C(a,b)) . \label{rec:a>b:shorthand}
\end{align}
\begin{lemma}\label{lem:strategic-potential-solution}
  The potential $\sol$ satisfies the recurrence \eqref{rec:a>b} when
  $\gxs=p$ and $p\leq 0.308$ (root of the polynomial
  $p^3- 6 p^2+ 5 p - 1$).
\end{lemma}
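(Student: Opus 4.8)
The plan is to check directly that the piecewise function $\sol$ satisfies \eqref{rec:a>b:shorthand}, i.e.\ that $\sol(a,b)=\max(\sol_M(a,b),\sol_R(a,b),\sol_C(a,b))$ at every state, the release term being present only for $a\geq b+1$. I would carry this out in the three regimes $a\leq b$, $a=b+1$, and $a\geq b+2$ cut out by the (equivalent) definition of $\sol$. Throughout I rely on the facts about the immediate-release potential $\f$ proved for $p\le 0.361$ in Theorem~\ref{thm:lower-bound} and its supporting lemmas: that $\g=p$ and $\f(0,0)=0$; that all landing states satisfy $\f(0,s)=0$, so $\sol_C=0$; that winning states satisfy $\f(b+1,b)=b+1-p$; that $\f$ is non-decreasing in $a$ (Proposition~\ref{lem:struct-monotonicity}); and, most importantly, the tied-state bound $\f(k,k)\le k\,r(k,k)\le k\,p/(1-p)$ obtained by applying Corollary~\ref{cor:bound-restriction} at $(0,0)$ with $c=k$ and then Lemma~\ref{lem:win-probability}. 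I would also record the identities already built into the parameters $\lambda,\mu,c$: the relation $\lambda-\mu=1$, the boundary fit $\f(b+1,b)=(b+1)\lambda-b\mu-c$, and the mining identity $a\lambda-b\mu-c=p\,\sol(a+1,b)+(1-p)\sol(a,b+1)-p(1-p)$ valid on the second piece.

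The two extreme regimes are routine. For $a\le b$ the release option is absent and all of $(a,b),(a+1,b),(a,b+1)$ lie in the first piece, so the recurrence for $\sol$ is verbatim the immediate-release recurrence that $\f$ already satisfies. For $a\ge b+2$ all three neighbours evaluate through the second-piece formula (using the boundary fit for $(a,b+1)$ when $a=b+2$), so the mining identity gives $\sol_M(a,b)=\sol(a,b)$ exactly; it then suffices to see that $\sol_M$ dominates the other two options, which reduces—using $\lambda-\mu=1$—to $\sol_M-\sol_C=a\lambda-b\mu-c\ge 2\lambda-c>0$ and, via the boundary fit, to the constant $\sol_M-\sol_R=c\ge 0$. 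Both hold for all $p<1/2$, so these regimes introduce no constraint on $p$.

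The threshold is forced entirely by the case $a=b+1$. Here $\sol(b+1,b)=b+1-p$, releasing is exactly optimal since $\sol_R(b+1,b)=\sol(0,0)+b+1-p=b+1-p$, and $\sol_C=0\le b+1-p$; hence the recurrence collapses to the single inequality $\sol_M(b+1,b)\le b+1-p$. Expanding $\sol_M(b+1,b)=p\,\sol(b+2,b)+(1-p)\,\sol(b+1,b+1)-p(1-p)$, the neighbour $\sol(b+2,b)$ is second-piece and equals $b+2\lambda-c$, while $\sol(b+1,b+1)=\f(b+1,b+1)$ is a tied state, on which I substitute the upper bound $\f(b+1,b+1)\le (b+1)\,p/(1-p)$. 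After this substitution the coefficient of $b$ on the two sides compares $2p$ with $1$, so the inequality is slack for large $b$ and binding at $b=0$; clearing the positive factor $1-2p$ there turns $\sol_M(1,0)\le 1-p$ into exactly $p^3-6p^2+5p-1\le 0$, i.e.\ $p\le 0.308$.

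The main obstacle is precisely this last case: unlike the other two regimes it cannot be closed by the exact mining identity, because the mining branch lands on a tied state whose potential is \emph{not} given by the clean affine second-piece formula. The entire content of the lemma is the strength of the bound one uses for $\f(b+1,b+1)$, and it is the gambler's-ruin estimate $r(b+1,b+1)\le p/(1-p)$ of Lemma~\ref{lem:win-probability} that makes the binding inequality reduce to the stated cubic; a looser bound would not produce $0.308$. I would therefore devote the care to justifying that substitution and to checking the monotonicity-in-$b$ claim, leaving the remaining algebra in the other regimes as routine.
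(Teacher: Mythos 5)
Your proposal is correct and follows essentially the same route as the paper's proof: the same three-regime split ($a\le b$ trivially inherited from $\f$, $a\ge b+2$ closed by the exact mining identity plus the dominance checks $\sol_M-\sol_R=c\ge 0$ and $\sol_R\ge 0$, and $a=b+1$ as the binding case), with the same key substitution $\f(b+1,b+1)\le (b+1)p/(1-p)$ from Corollary~\ref{cor:bound-restriction} and Lemma~\ref{lem:win-probability} yielding the cubic $p^3-6p^2+5p-1\le 0$ at the worst case $b=0$. You also correctly identify that the entire content of the lemma lives in this tied-state bound, which is exactly where the paper's argument concentrates.
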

\begin{proof}
We break down the proof into three distinct claims:
\begin{claim}
  For states $(a,b)$ with $a < b+1$, $$\sol(a,b)=\max(\sol_M(a,b),\sol_R(a,b),\sol_C(a,b)).$$
\end{claim}
Using the alternative definition of $\sol$, the claim holds trivially, since $\f$ satisfies it, and $\sol_R(a,b)=-\infty$.

\begin{claim}
  For states $(a,b)$ with $a>b+1$, $$\sol(a,b)=\sol_M(a,b)=\max(\sol_M(a,b),\sol_R(a,b),\sol_C(a,b)).$$
\end{claim}
This claim also follows directly: $\sol(a,b)=\sol_M(a,b)$ holds
by design, and it is easy to verify that $\sol_M$ gives the maximum of
the three values:
\begin{align*}
\sol_M(a,b) &> a\lambda-b\mu-c-\frac{p(1-p)}{1-2p} = \sol_R(a,b), \\
\sol_R(a,b) &\geq 0 = \sol_C(a,b).
\end{align*}

\begin{claim}
  For states $(b+1,b)$:
  $$\sol(b+1,b)=\sol_R(b+1,b)=\max(\sol_M(b+1,b),\sol_R(b+1,b),\sol_C(b+1,b)).$$
\end{claim}
Note first that $\sol_R(b+1,b)=b+1-p\geq 0 \geq \sol_C(b+1,b)$. The most
complicated part is to show that $\sol_R(b+1,b)\geq
\sol_M(b+1,b)$. To do this, we write
\begin{align*}
  \sol_M(b+1,b)&= p\sol(b+2,b)+(1-p)\sol(b+1,b+1)-p(1-p) \\
               &= p((b+2)\lambda-b\mu-c)+(1-p)\f(b+1,b+1)-p(1-p).
\end{align*}
From Corollary \ref{cor:bound-restriction} for the immediate-release
case, we get the bound $\f(b+1,b+1)\leq (b+1) p/(1-p)$. By
substituting this we bound $\sol_M(b+1,b)$ from above by
$2 b p + p (2 - 4 p + p^2)/(1 - 2 p)$.  We want this to be at most
equal to $\sol_R(b+1,b)=b+1-p$ which gives the following inequality,
$$ b (1 - 2 p)^2 + 1 - 5 p + 6 p^2 - p^3 \geq 0.$$
Since $b\geq 0$, it suffices to have $1 - 5 p + 6 p^2 - p^3 \geq 0$
which holds for $p\leq 0.308$ (root of the polynomial
$p^3- 6 p^2+ 5 p - 1$).
\end{proof}

We now present the main result of this section based on the previous lemma.
\begin{proof}[Proof of Theorem \ref{thm:strategic-lower-bound}]
  The above lemma implies that for every state $(a,b)$,
  \begin{equation}
    \label{eq:4}
    \gx_k(a,b) \leq k\cdot p +\sol(a,b).
  \end{equation}
  Intuitively, $\sol(a,b)$ can only overestimate the optimal potential
  when $a\geq b+1$, even when $\gxs>p$. We prove this using induction on $k$, $b$, and $a$. This is possible because Recurrence
  \eqref{rec:gx} of $\gx_k$ depends on $\gx_k(0,s)$ for $s<b$ (for $b>0$), on
  $g_k(a+1,b)$, and/or  $\gx_{k-1}$.

For the outer induction on $k$, the base case $k=0$ is
  trivial since $\gx_0(a,b)=0$ and $\sol(a,b)\geq 0$. For a fixed $k$, we use double (strong) induction on $b$ and
  (backwards induction on) $a$. So, for fixed $k$ and $b$ we assume
  that the statement holds for all $k'<k$ and for all states $(a,b')$
  such that $b' < b$. Note that for the base case  $b=0$, the induction step does not use the inductive
hypothesis on $b$ since a rational miner never capitulates from a
state $(a,0)$.

  Since $\gx_k(a,b)$ cannot be bigger than $k+b$, there is a value of
  $a$ large enough such that $a\lambda-b\mu-c\geq k+b$; call this
  value $a_m(k,b)$. Statement \eqref{eq:4} holds for every
  $a\geq a_{m}(k,b)$ because
  $$\gx_k(a,b)\leq k+b \leq a\lambda-b\mu-c = \sol(a,b) \leq
  k\cdot p +\sol(a,b).$$ We can then use any value $\hat{a}\geq a_m(k,b)$ as base
  case of the backwards induction on $a$. More formally,
  \begin{align*}
  \gx_k(a,b)=&\max\left\{\max_{s=0,\ldots,b-1} \gx_{k}(0,s), p\, \gx_k(a+1, b)
            +
            (1-p)\, \gx_{k-1}(a, b+1), \right.\\
  & \left.  \textcolor{white}{\max_{s=0,\ldots,b-1}}\gx_{k-1}(a-b-1,0)+b+1
          \right\}\\
  \leq & kp +\max\left\{\max_{s=0,\ldots,b-1} \sol(0,s),  p\,\sol(a+1, b)
            +
            (1-p)\,\sol(a, b+1)-p(1-p), \right.\\
  & \left.  \textcolor{white}{\max_{s=0,\ldots,b-1}}\sol(a-b-1,0)+b+1-p
          \right\}\\
  =& kp +\sol(a,b),
  \end{align*}
where the inequality holds by our induction hypothesis, and the last
equality by Lemma \ref{lem:strategic-potential-solution}.

  Statement \eqref{eq:4} shows that the optimal gain per step
  cannot exceed $p$, which shows that \frontier\ is best response for
  Miner 1. The proof of the theorem is complete.
  \end{proof}



\bibliographystyle{apalike}
\bibliography{bitcoin-bibfile}

\end{document}